% LLNCS macro package for Springer Computer Science proceedings;
% Version 2.20 of 2017/10/04
%
\documentclass[runningheads,12pt]{llncs}
\usepackage{graphicx}
% Used for displaying a sample figure. If possible, figure files should
% be included in EPS format.
%
% If you use the hyperref package, please uncomment the following line
% to display URLs in blue roman font according to Springer's eBook style:
% \renewcommand\UrlFont{\color{blue}\rmfamily}
%\usepackage{natbib}
\usepackage{booktabs}
\usepackage{wrapfig,subcaption}
\usepackage{enumerate,setspace}
\usepackage{amsmath,amssymb,latexsym} % ams auxiliaries.
\usepackage[margin=1.2in]{geometry}
\usepackage{etoolbox}
\usepackage{enumitem}
\usepackage{color}
\usepackage[colorlinks,linkcolor=blue,citecolor=magenta]{hyperref}
\usepackage[colorinlistoftodos,bordercolor=orange,backgroundcolor=orange!20,linecolor=orange,textsize=scriptsize]{todonotes}
\usepackage{soul}
\usepackage{microtype}
\usepackage{yhmath}
\usepackage{tikz}
\usetikzlibrary{arrows,shapes, calc, fit, positioning}
\usepackage{caption}
\usepackage{mathtools}
\usepackage{verbatim}
\usepackage{times}
\usepackage[boxed]{algorithm}
\usepackage[noend]{algorithmic}
\usepackage{mathtools}
\usepackage{cite} 
\usepackage[square,sort]{natbib}
\usetikzlibrary{decorations.pathreplacing,calligraphy}

\usepackage{subcaption}
\usetikzlibrary{positioning,chains,fit,shapes,calc}
%----- new commands ----------------------
\newcommand{\bfx}{\boldsymbol{x}}

\newcommand{\bfe}{\boldsymbol{e}} 

\newcommand{\bbR}{\mathbb{R}}
\newcommand{\bbN}{\mathbb{N}}

\newcommand{\argmax}{\mbox{argmax}}
%----- new commands for this paper only-----------------------------

\newcommand{\half}{\mathrm{half}}

\newcommand{\calI}{\mathcal{I}}

\newcommand{\ttT}{\mathsf{T}}
\newcommand{\ceil}[1]{\lceil #1 \rceil }
\newcommand{\floor}[1]{\lfloor #1 \rfloor }

\def\le{\leqslant}
\def\ge{\geqslant}

\begin{document}
\title{How to cut a discrete cake fairly}
%
%\titlerunning{Abbreviated paper title}
% If the paper title is too long for the running head, you can set
% an abbreviated paper title here
%
\author{Ayumi Igarashi}
\authorrunning{Igarashi}
% First names are abbreviated in the running head.
% If there are more than two authors, 'et al.' is used.
%
\institute{National Institute of Informatics, Japan,
\email{ayumi\_igarashi@nii.ac.jp}}
\maketitle              % typeset the header of the contribution
\begin{abstract}
Cake-cutting is a fundamental model of dividing a heterogeneous resource, such as land, broadcast time, and advertisement space. 
In this study, we consider the problem of dividing a discrete cake fairly in which the indivisible goods are aligned on a path and agents are interested in receiving a connected subset of items. We prove that a connected division of indivisible items satisfying a discrete counterpart of envy-freeness, called {\em envy-freeness up to one good} (EF1), always exists for any number of agents $n$ with monotone valuations. Our result settles an open question raised by Bil\`o et al. (2019), who proved that an EF1 connected division always exists for the number of agents $n \le 4$. Moreover, the proof can be extended to show the following (1) ``secretive" and  (2) ``extra" versions: (1) for $n$ agents with monotone valuations, the path can be divided into $n$ connected bundles such that an EF1 assignment of the remaining bundles can be made to the other agents for any selection made by the “secretive agent”; (2) for $n+1$ agents with monotone valuations, the path can be divided into $n$ connected bundles such that when any ``extra agent” leaves, an EF1 assignment of the bundles can be made to the remaining agents.  
\end{abstract}

\section{Introduction}
Imagine a group of researchers scheduling time slots for meetings. Their preferences may be heterogeneous: for example, one researcher may prefer a morning meeting, whereas another may prefer an afternoon meeting. This situation raises the question: how can we allocate time slots fairly? This problem falls within the field of the well-known cake-cutting problem, in which the {\em cake}, often represented by the unit interval $[0,1]$, has to be divided between $n$ agents with different preferences. Here, the term ``cake” is a metaphor for a heterogeneous divisible resources, such as land or time. 

%The first pertinent issue is what constitutes `fairness'? That is, when do people perceive that they have received their `fair' share? 
A central notion of fairness in the literature is {\em envy-freeness} \citep{Foley67}, which requires that each agent receives their personal best piece out of the allocated pieces. In this scenario, no agent wishes to replace their allocated portion with that of any other agent. %\citet{Dubins1961} reported the first demonstration of such a division in the cake-cutting problem. 
The classical result shows that under
mild assumptions on the agents' preferences, there is an envy-free division offering each agent a connected piece \citep{Stromquist80,Su1999,Woodall1980}. Note that the connectivity constraint is crucial in various contexts, particularly when the resource has temporal or spacial structure. 
%breifly motivate connectivity constraint

In many application domains, the resource may be indivisible. For example, time is often divided into discrete time units, such as scheduled shifts and research seminars. As another example, land may be divided into discrete land plots, due to geographical or historical constraints. A discrete version of the cake-cutting problem has been considered in several papers recently~\citep{BiloCaFl19,BouveretCeEl17,Marenco2014,Suksompong2019}. In this framework, the indivisible items are aligned on a path and each agent is allocated to a connected bundle of items. 

In allocation of indivisible resources, envy-freeness is not guaranteed. Indeed, in the case of one item and two agents, one agent necessarily receives nothing and therefore envies the other. Nevertheless, the objective can be naturally relaxed via approximations. An approximate notion of envy-freeness, called \emph{envy-freeness up to one good} (EF1), has been intensively investigated in recent years \citep{Budish11}. EF1 allows agents to envy other agents but the envy can be eliminated after removing one item from others' bundles. Several algorithms achieve EF1 within the standard setting of fair division of indivisible items~\citep{LiptonMaMo04,CaragiannisKuMo16}. For agents with monotone valuations, the envy-cycle algorithm in \citet{LiptonMaMo04} returns an EF1 division. For agents with monotone additive valuations, an allocation maximizing the Nash product of agents' valuations satisfies EF1 \citep{CaragiannisKuMo16}. Can we achieve EF1 under the connectivity constraint of a path? 

This question was partially answered by \citet{BiloCaFl19}. They showed that an EF1 connected division exists for any monotone valuations when there are at most four agents. They followed Su's approach \citep{Su1999} using Sperner's lemma: The possible divisions of the path can be encoded by the vertices of a triangulated $(n-1)$-dimensional simplex; see Figure~\ref{fig:Sperner} for an illustration with $n=3$. Each agent colors each vertex with the index of the most preferred bundle of the partition represented by each vertex. Sperner's lemma then implies the existence of a small simplex labeled with distinct agents and colored with different indices of bundles. Loosely speaking, this simplex corresponds to a sequence of ``similar” divisions, each of which satisfies different agents with different bundles. 
\citet{BiloCaFl19} developed a rounding technique of such a simplex that works for four or fewer agents. 
%\todo[inline]{Elaborate more on the paragraph}
However, the existence of EF1 is unclarified when the number of agents $n$ is more than four, while a connected division satisfying the weaker fairness notion of EF2, requiring the envy to be bounded up to two items, exists for any number of agents \citep{BiloCaFl19}. %The key issue is that the proof of \citep{BiloCaFl19} exhibit ... 
A key issue in this proof lies in the left-right {\em symmetry} in the agents' evaluation. Since agents treat left-most and right-most items of a bundle symmetrically, they may face a conflict with another agent who want neighboring bundles; see Section~\ref{sec:potential} for the detailed discussion. %when allocating an item that can belong to the same bundle on the simplex. 

To settle this open question, we show that an EF1 connected division exists for any number of agents with monotone valuations.   
%---that is, agents always prefer any non-empty bundle to an empty bundle
Our proof adopts the rounding technique of a simplex, similar to that developed in \citet{BiloCaFl19}, but our final path division has left-right {\em asymmetry}. Namely, our rounding algorithm prioritizes an agent who wants a left bundle over an agent who wants a right bundle on the simplex. In our proof, each agent is pessimistic about obtaining the left-most item and optimistic about obtaining the right-most item, which ensures that the estimate of each agent $i$ on the $j$-th bundle of the final output is neither overly optimistic nor overly pessimistic.
In this way, we can successfully circumvent the difficulty arising when $n \geq 5$. In fact, we show the existence of a connected division satisfying a slightly stronger notion of EF1$_{\emph outer}$, which additionally requires that the envied bundles remain connected after removing an item.\footnote{The result of \citet{BiloCaFl19} also holds with EF2$_{\emph outer}$ for any number of agents with monotone valuations.} 

Exploiting the proof of this theorem, we further obtain the discrete analogs of the existential result concerning a secretive envy-free divisions of the cake \citep{Woodall1980,Asada2018,MeunierSu}: for any number $n$ of agents with monotone valuations over a path, one can divide the path into $n$ parts such that whichever part a secretive agent chooses, an EF1$_{\emph outer}$ assignment of the remaining bundles can be made to the other agents.

For two agents, such existence directly follows from a discrete version of the cut-and-choose protocol over the path: the first agent computes an EF1$_{\emph outer}$ connected division among two agents as if the other agent has the same valuation, and then the second agent (called the {\em secretive agent}) selects a preferred bundle, leaving the remainder for the first agent. Here, the valuation of one agent is sufficient to find a partition such that whichever part another agent chooses, the resulting assignment is EF1$_{\emph outer}$. 
More generally, our result shows that the valuations of $n-1$ agents suffice to find an EF1$_{\emph outer}$ connected division among $n$ agents. Note that without connectivity constraints, a secretive EF1 division is known to exist and can be computed in polynomial time when the agents have monotone valuations \citep{Arunachaleswaran2019}. However, our result is the first to show that the existential result holds in conjunction with connectivity requirements. %the existence of an extra EF1 division has not been previously known.

Finally, we show the dual statement that for $n+1$ agents with monotone valuations, the path can be partitioned into $n$ connected subsets such that when any extra agent leaves, an EF1$_{\emph outer}$ assignment of the bundles can be made to the remaining agents, thereby establishing the discrete counterpart of the existence of an extra envy-free cake division, recently shown by \citet{MeunierSu}.  

%Our result has some implications to the setting without connectivity constraints. In particular, it implies that when agents have monotone valuations, an EF1 connected division is possible for any ordering of the items.
An important application of our results is that for graph fair division proposed by \citet{BouveretCeEl17}. This setting captures, e.g., the division of road networks, where the items can be aligned on a graph and the agents value connected bundles of the items. Our results on a path apply to a wider class of \emph{traceable graphs} that admit a Hamiltonian path. Indeed, for such graphs, one can take a Hamiltonian path of the original graph and apply our result to obtain an EF1$_{\emph outer}$ connected division of the Hamiltonian path; the resulting division is trivially both EF1$_{\emph outer}$ and connected in the original graph.\footnote{This observation has already been made in \citep{BiloCaFl19}.} Thus, for any number of agents with monotone valuations, an EF1$_{\emph outer}$ connected division of a graph exists whenever the graph is traceable.

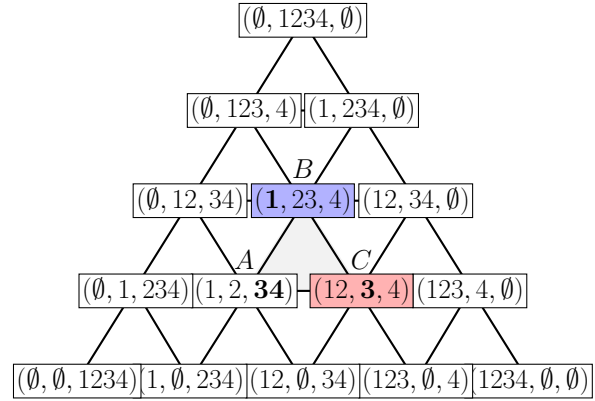
\begin{figure}[tbh]
\centering
    \begin{tikzpicture}[scale=0.6, transform shape]

%     \draw [<->,thick] (-1,9) node (yaxis) [above] {\Large $x_2$}|- (9,-1) node (xaxis) [right] {\Large $x_1$};

    \filldraw[gray!10] (5,4) -- (3.7,2) -- (6.3,2) -- (5,4);

%%%triangulation
    \draw[thick] (0,0) -- (10,0) -- (5,8) -- (0,0);
    \draw[thick] (1.3,2) -- (8.7,2);
    \draw[thick] (2.5,4) -- (7.5,4);
    \draw[thick] (3.7,6) -- (6.3,6);

    \draw[thick] (1.3,2) -- (2.5,0) -- (6.3,6);
    \draw[thick] (2.5,4) -- (5,0) -- (7.5,4);
    \draw[thick] (3.7,6) -- (7.5,0) -- (8.7,2);

%%%%labeling
    %y=8
    \node[draw,fill=white, inner sep=2pt,minimum size=2pt] at (5,8) {\Large $(\emptyset,1234,\emptyset)$};
    \node[draw,fill=white,inner sep=2pt,minimum size=2pt] at (6.3,6) {\Large $(1,234,\emptyset)$};
    \node[draw,fill=white,inner sep=2pt,minimum size=2pt] at (7.5,4) {\Large $(12,34,\emptyset)$};
    \node[draw,fill=white,inner sep=2pt,minimum size=2pt] at (8.7,2) {\Large $(123,4,\emptyset)$};
    \node[draw,fill=white,inner sep=2pt,minimum size=2pt] at (10,0) {\Large $(1234,\emptyset,\emptyset)$};

%y=6
    \node[draw,fill=white,inner sep=2pt,minimum size=2pt] at (3.7,6) {\Large $(\emptyset,123,4)$};
    \node at (5,4.7) {\Large $B$}; 
    \node[draw,fill=blue!30,inner sep=2pt,minimum size=2pt] at (5,4) {\Large $({\bf 1},23,4)$};
    \node at (6.3,2.7) {\Large $C$};
    
    \node[draw,fill=red!30,inner sep=2pt,minimum size=2pt] at (6.3,2) {\Large $(12,{\bf 3},4)$};
    \node[draw,fill=white,inner sep=2pt,minimum size=2pt] at (7.5,0) {\Large $(123,\emptyset,4)$};

%y=4
    \node[draw,fill=white,inner sep=2pt,minimum size=2pt] at (2.5,4) {\Large $(\emptyset,12,34)$};
    \node[draw,fill=white,inner sep=2pt,minimum size=2pt] at (3.7,2) {\Large $(1,2,{\bf 34})$};
    \node at (3.7,2.7) {\Large $A$};
    
    \node[draw,fill=white,inner sep=2pt,minimum size=2pt] at (5,0) {\Large $(12,\emptyset,34)$};

    %y=6
    \node[draw,fill=white,inner sep=2pt,minimum size=2pt] at (1.3,2) {\Large $(\emptyset,1,234)$};
    \node[draw,fill=white,inner sep=2pt,minimum size=2pt] at (2.5,0) {\Large $(1,\emptyset,234)$};

    %y=8
    \node[draw,fill=white,inner sep=2pt,minimum size=2pt] at (0,0) {\Large $(\emptyset,\emptyset,1234)$};
 
\end{tikzpicture}%\\(c) The proper coloring
\caption{Illustration of a simplex whose vertices represent divisions of a path with four vertices into three bundles. In the small triangle colored in light gray, agent $A$ most-prefers the third bundle at the left corner vertex, $B$ most-prefers the first bundle at the middle corner vertex, and $C$ most-prefers the second bundle at the right corner vertex.}
\label{fig:Sperner}
\end{figure}%

\smallskip
\noindent
{\bf Related work}
%cake division
%The mathematical study of cake-cutting dates back to \citet{Steinhaus48}. 
%path division
To the best of our knowledge, the problem of dividing a discrete cake, i.e., a path, has been first considered in \citep{Marenco2014,Suksompong2019,BaranyGrinberg}. \citet{Marenco2014} studied a special valuation in which each item is liked by exactly one agent and showed that an envy-free connected division exists for such valuations. \citet{Suksompong2019} considered approximation of envy-freeness, showing that a simple rounding of an envy-free division gives us a connected division such that the envy is bounded by at most $2 v_{max}$ for agents with additive valuations, where $v_{max}$ is the maximum value of the agents for the single items; a result similar to this has been obtained in \citet{BaranyGrinberg}. 

The existence and complexity issues regarding other solution concepts on a discrete cake have been studied by a number of papers \citep{BouveretCeEl17,Misra,GoldbergHS20,Igarashi_Peters_2019,Bei_Igarashi_Lu_Suksompong_2021}. \citet{BouveretCeEl17} showed that deciding the existence of a connected division satisfying envy-freeness or proportionality is NP-hard; \citet{GoldbergHS20} strengthened these results by showing that the problems remain hard even when agents have binary additive valuations. 
%PO
\citet{Igarashi_Peters_2019} considered the relationship between fairness and efficiency. Unlike the standard fair division setting \citep{CaragiannisKuMo16}, it has been shown that EF1 and Pareto-optimality are incompatible under connectivity constraints of a path. They further showed that finding a Pareto-optimal connected division of a path satisfying approximate notions of fairness is NP-hard even when agents have binary additive valuations. 
%EQ1
\citet{Misra} focused on an approximate notion of equitability, called equitability up to one good (EQ1), and proved that an EQ1 connected division always exists and can be computed efficiently whenever agents have monotone valuations. 
%IPS property
\citet{Bei_Igarashi_Lu_Suksompong_2021} showed that a connected division satisfying a proportionality relaxation, called the \emph{invisible proportional share} (IPS) property, exists on a path for any number of agents with additive valuations; note that the IPS property is stronger than other proportionality relaxations, considered in \citep{AzizCaIg19,Suksompong2019,ConitzerFreemanShah}. 

%graph fair division
\citet{BouveretCeEl17} proposed a model for allocating indivisible goods under connectivity constraints of a graph. The graph fair division has attracted a great deal of attention since then~\citep{Bouveret2019,Igarashi_Peters_2019,Bei_Igarashi_Lu_Suksompong_2021,TruszczynskiL20,BiloCaFl19,BouveretCeEl17,GrecoS20,parameterizedFair}. 
%EF1
\citet{BiloCaFl19} developed several methods to obtain an EF1 division under connectivity constraints of a path when the number of agents is two, three, or four, or when the agents have identical monotone valuations. Further, they characterized the family of graphs for which an EF1 connected division always exists for two agents with monotone valuations.%; \citet{Bei_Igarashi_Lu_Suksompong_2021} extended this characterization . 
\citet{parameterizedFair}, on the other hand, proved that the problem of deciding the existence of an EF1 connected division is NP-hard even when the graph is a star and the agents have binary additive valuations. 

%MMS
Another concept of fairness that has been extensively studied in the study of graph fair division is the {\em maximin fair share criterion} (MMS), where the maximin fair share is defined for the set of all connected divisions. The problem of computing an MMS division is known to be polynomial-time solvable on trees \citep{BouveretCeEl17} and on cycles \citep{TruszczynskiL20}; however, it becomes intractable even when the underlying graph has a bounded treewidth \citep{GrecoS20}. \citet{Bei_Igarashi_Lu_Suksompong_2021} examined the gap between unconstrained MMS and graph-restricted MMS for various graphs. 

%continuous graph 
%Recent works by \citet{BeiSuksompong2021} and \citet{ZwickerIgarashi2021} studied a continuous variant of graph fair division. In particular, \citet{ZwickerIgarashi2021} established the connection between the existence of an envy-free division and that of an approximate division and showed that the only topological class of a graph\footnote{The topological class of a graph $G$ is the family of graphs that can be obtained from subdividing the edges of $G$.} that guarantees the existence of an EF$k$ connected division\footnote{A division is EF$k$ if for every pair of agents $i,j$, the envy of $i$ disappears after removing $k$ items from $j$'s bundle.} for any number $n$ of agents with monotone valuations is the class of Hamiltonian graphs. 
%under constraints
%Besides connectivity constraints of a graph, several papers investigated other types of constraints, such as cardinality and matroid constraints~\cite{BiswasBa18,Dror_Feldman_Segal-Halevi_2021,jojic2019splitting}. 

\section{Preliminaries}\label{sec:prem}
For each natural number $s \in \bbN$, we write $[s]=\{1,2,\ldots,s\}$. For each pair of natural numbers $s,t \in \bbN$ with $s \le t$, we write $[s,t]=\{s,s+1,\ldots,t\}$. 
We are given $n$ \emph{agents} and $m$ \emph{items} (or \emph{goods}). We may refer to subsets of items as {\em bundles}. The items are aligned along a path $(1,2,\ldots,m)$. Each agent $i$ has a \emph{valuation function} $v_i$ that assigns a real value to every connected subset of the path. For two connected subsets $S$ and $T$, an agent $i$ \emph{weakly prefers} (resp. \emph{strictly prefers}) $S$ to $T$ if $v_i(S) \geq v_i(T)$ (resp. $v_i(S) > v_i(T)$).
We assume that all agents have {\em monotone valuations}, i.e., each agent $i$ weakly prefers $T$ to $S$ whenever $S \subseteq T$ and that $v_i(\emptyset)=0$ for each $i \in N$. 

%We aim to divide the path $(1,2,\ldots,m)$ into $n$ connected subsets. 
A \emph{division} is a partition $\calI=(I_1,I_2,\ldots,I_n)$ of the path into $n$ connected bundles, where $I_i$ is the $i$th bundle from the left. %Given division $A$, $i$ \emph{envies} $j$ if $v_i(A_i) > v_i(A_j)$. 
A division $\calI$ is \emph{envy-free} if there exists a permutation $\pi \colon [n] \rightarrow [n]$ such that $v_i(I_{\pi(i)}) \ge v_i(I_{\pi(j)})$ for any pair $i,j$ of agents. 

An envy-free division is not guaranteed when the items are indivisible. For instance, when two agents desire one item, one agent gets the item, whereas the other gets nothing. Thus, \citet{Budish11} relaxed the envy-freeness condition to {\em envy-freeness up to one good} (EF1). In an EF1 division, an agent can envy another agent, but envy will disappear after one item is removed from the other's bundle. We adopt a slightly more robust version of EF1, introduced in \citet{BiloCaFl19}, requiring that removing the items leave the envied bundles connected. 

\begin{definition}[EF1$_{outer}$: envy-freeness up to one \emph{outer} good]
A division $\calI$ satisfies \emph{EF1$_{outer}$} if there exists a permutation $\pi \colon [n] \rightarrow [n]$ such that for any pair $i,j$ of agents, $v_i(I_{\pi(i)}) \ge v_i(I_{\pi(j)})$, or there exists a good $g \in I_{\pi(j)}$ such that $I_{\pi(j)} \setminus \{g\}$ is connected\footnote{We consider the empty set to be connected.} and $v_i(I_{\pi(i)}) \ge v_i(I_{\pi(j)}\setminus \{g\})$.
\end{definition}

In our context, EF1$_{outer}$ is fairer than EF1. In particular, EF1 may not be binding at all when nonconnected subsets are less preferred compared with connected subsets or even undesirable; for instance, when allocating time slots for certain tasks among multiple employees, people often value being allocated a contiguous chuck of time, instead of being allocated to a disconnected one.  
%\todo[inline]{add an example}
%we do not know how agents value non-connected subsets of items; they may consider non-connected subsets of items to be slightly less preferred to connected subsets or even totally useless.  

We introduce the following notation in \citet{BiloCaFl19}. For every connected subset $I$, we define the \emph{up-to-one valuation} $v^-_i$ of agent $i$ as
\begin{align*}
%	\label{eq:up-to-one-valuation}
	v^-_i(I) :=
	\begin{cases}
		0 & \text{if $I = \emptyset$,} \\
		\min \big\{v_i(I\setminus\{g\}) : \\
		g\in I \text{ such that } I\setminus\{g\}\text{ is connected} \big\} & \text{if $I \neq \emptyset$.}
	\end{cases}
\end{align*}
Clearly, a division $\calI$ satisfies EF1$_{outer}$ if and only if there exists a permutation $\pi \colon [n] \rightarrow [n]$ such that $v_i(I_{\pi(i)}) \ge v_i^-(I_{\pi(j)})$ for any pair $i,j$ of agents. 

Throughout this paper, we assume that the number of items exceeds the number of agents, i.e., $m\ge n$. Note that if $m < n$, a trivial EF1$_{outer}$ division exists.

\subsection{Sperner's lemma and the existence of an envy-free division}
We review basic notions of combinatorial topology and explain the link between Sperner's lemma and the classical cake cutting problem. 
An $(n-1)$-simplex $S$ is the convex hull of $n$ {\em main vertices} $\bfx_1,\bfx_2,\ldots,\bfx_n$; we write $S=\langle \bfx_1,\bfx_2,\ldots,\bfx_n \rangle$. For $j \in [n]$, we write $\bfe^{j} \in \{0,1\}^n$ as the $j$-th unit vector, where $e^j_h=1$ if $h=j$ and $e^j_h=0$ otherwise. The $(n-1)$ standard simplex $\Delta^{n-1}$ is the $(n-1)$-simplex whose main vertices are given by $\bfe^1,\bfe^2,\ldots,\bfe^n$. 
A {\em triangulation} $\ttT$ of an $(n-1)$-simplex $S$ is a collection of smaller $n-1$ simplices $S_1,S_2,\ldots,S_k$ where $S=\bigcup^k_{j=1}S_j$, and for each pair of distinct indices $i,j \in [k]$, the intersection $S_i \cap S_j$ is either empty or a face common to them. We refer to $S_1,S_2,\ldots,S_k$ as {\em elementary} simplices. We denote by $V(\ttT)$ the set of vertices of a triangulation $\ttT$.

\begin{figure*}[thb]
\centering
\begin{subfigure}[t]{.33\linewidth}
    \begin{tikzpicture}[scale=0.4, transform shape]
%simplex%%%%%%%%%%%%%%%%%%%%%%%%%%%%%%%
\begin{scope}[xshift=0cm,yshift=0cm]
     \draw [<->,thick] (-1,9) node (yaxis) [above] {\Large $x_2$}|- (9,-1) node (xaxis) [right] {\Large $x_1$};
    %\draw[thick] (-1.2,0) node [left] {\Large $\frac 12$} -- (-0.8,0);
    %\draw[thick] (-1.2,8) node [left] {\Large $m+\frac12$} -- (-0.8,8);
    %\draw[thick] (0,-1.2) node [below] {\Large $\frac 12$} -- (0,-0.8);
    %\draw[thick] (8,-1.2) node [below] {$m+\frac12$} -- (8,-0.8);

    \draw[thick] (-1.2,0) node [left] {} -- (-0.8,0);
    \draw[thick] (-1.2,8) node [left] {} -- (-0.8,8);
    \draw[thick] (0,-1.2) node [below] {} -- (0,-0.8);
    \draw[thick] (8,-1.2) node [below] {} -- (8,-0.8);
    \draw[thick] (0,0) -- (0,8) -- (8,8) -- (0,0);
\end{scope}
\end{tikzpicture}%\\
\subcaption{The standard simplex}
%(a) The standard simplex
\end{subfigure}%
\begin{subfigure}[t]{.33\linewidth}
    \begin{tikzpicture}[scale=0.4, transform shape]
%triangulation%%%%%%%%%%%%%%%%%%%%%%%
\begin{scope}[xshift=13cm,yshift=0cm]
     \draw [<->,thick] (-1,9) node (yaxis) [above] {\Large $x_2$}|- (9,-1) node (xaxis) [right] {\Large $x_1$};
    \draw[thick] (-1.2,0) node [left] {} -- (-0.8,0);
    \draw[thick] (-1.2,8) node [left] {} -- (-0.8,8);
    \draw[thick] (0,-1.2) node [below] {} -- (0,-0.8);
    \draw[thick] (8,-1.2) node [below] {} -- (8,-0.8);
    
    %\draw[thick] (-1.2,0) node [left] {\Large $\frac 12$} -- (-0.8,0);
    %\draw[thick] (-1.2,8) node [left] {\Large $m+\frac12$} -- (-0.8,8);
    %\draw[thick] (0,-1.2) node [below] {\Large $\frac 12$} -- (0,-0.8);
    %\draw[thick] (8,-1.2) node [below] {\Large $m + \frac12$} -- (8,-0.8);
    \draw[thick] (0,0) -- (0,8) -- (8,8) -- (0,0);
    %%%triangulation
    \foreach \y in {2,4,6}{
            \draw[thick] (0,\y) rectangle (\y,\y);
    }
    \foreach \x in {2,4,6}{
            \draw[thick] (\x,8) rectangle (\x,\x);
    }    
    %y=8
    \foreach \x in {2,4,6}{
        \draw[thick] (\x, 8) -- (\x - 2, 6);
  }
    %y=6
    \foreach \x in {2,4}{
        \draw[thick] (\x, 6) -- (\x - 2, 4);
    }

    %y=4
    \draw[thick] (2, 4) -- (0, 2);

%%%%labeling
    %y=8
    \node[draw,circle,fill=white, inner sep=2pt,minimum size=2pt] at (0,8) {\Large B};
    \node[draw,circle,fill=white,inner sep=2pt,minimum size=2pt] at (2,8) {\Large C};
    \node[draw,circle,fill=white,inner sep=2pt,minimum size=2pt] at (4,8) {\Large A};
    \node[draw,circle,fill=white,inner sep=2pt,minimum size=2pt] at (6,8) {\Large B};
    \node[draw,circle,fill=white,inner sep=2pt,minimum size=2pt] at (8,8) {\Large C};

%y=6
    \node[draw,circle,fill=white,inner sep=2pt,minimum size=2pt] at (0,6) {\Large A};
    \node[draw,circle,fill=white,inner sep=2pt,minimum size=2pt] at (2,6) {\Large B};
    \node[draw,circle,fill=white,inner sep=2pt,minimum size=2pt] at (4,6) {\Large C};
   \node[draw,circle,fill=white,inner sep=2pt,minimum size=2pt] at (6,6) {\Large A};

%y=4
    \node[draw,circle,fill=white,inner sep=2pt,minimum size=2pt] at (0,4) {\Large C};
    \node[draw,circle,fill=white,inner sep=2pt,minimum size=2pt] at (2,4) {\Large A};
    \node[draw,circle,fill=white,inner sep=2pt,minimum size=2pt] at (4,4) {\Large B};

    %y=6
    \node[draw,circle,fill=white,inner sep=2pt,minimum size=2pt] at (0,2) {\Large B};
    \node[draw,circle,fill=white,inner sep=2pt,minimum size=2pt] at (2,2) {\Large C};
    %y=8
    \node[draw,circle,fill=white,inner sep=2pt,minimum size=2pt] at (0,0) {\Large A};
\end{scope}
\end{tikzpicture}%\\(b) The Kuhn's triangulation
\subcaption{Owner labeling}
\end{subfigure}%
\begin{subfigure}[t]{.33\linewidth}
    \begin{tikzpicture}[scale=0.4, transform shape]
%triangulation%%%%%%%%%%%%%%%%%%%%%%%
\begin{scope}[xshift=26cm,yshift=0cm]
     \draw [<->,thick] (-1,9) node (yaxis) [above] {\Large $x_2$}|- (9,-1) node (xaxis) [right] {\Large $x_1$};

    %S*
    \filldraw[gray!10] (0,4) -- (0,6) -- (2,6) -- (0,4);
    %S*
    \filldraw[gray!10] (2,4) -- (2,6) -- (4,6) -- (2,4);
    %S*
    \filldraw[gray!10] (4,6) -- (6,6) -- (6,8) -- (4,6);
    %S*
    \filldraw[gray!10] (2,4) -- (4,4) -- (4,6) -- (2,4);
    %S*
    \filldraw[gray!10] (4,4) -- (4,6) -- (6,6) -- (4,4);

    %\draw[thick] (-1.2,0) node [left] {\Large $\frac 12$} -- (-0.8,0);
    %\draw[thick] (-1.2,8) node [left] {\Large$m+\frac12$} -- (-0.8,8);
    %\draw[thick] (0,-1.2) node [below] {\Large $\frac 12$} -- (0,-0.8);
    %\draw[thick] (8,-1.2) node [below] {\Large $m + \frac12$} -- (8,-0.8);

    \draw[thick] (-1.2,0) node [left] {} -- (-0.8,0);
    \draw[thick] (-1.2,8) node [left] {} -- (-0.8,8);
    \draw[thick] (0,-1.2) node [below] {} -- (0,-0.8);
    \draw[thick] (8,-1.2) node [below] {} -- (8,-0.8);

    \draw[thick] (0,0) -- (0,8) -- (8,8) -- (0,0);

    %%%triangulation
    \foreach \y in {2,4,6}{
            \draw[thick] (0,\y) rectangle (\y,\y);
    }
    \foreach \x in {2,4,6}{
            \draw[thick] (\x,8) rectangle (\x,\x);
    }    

    %y=8
    \foreach \x in {2,4,6}{
        \draw[thick] (\x, 8) -- (\x - 2, 6);
    }

    %y=6
    \foreach \x in {2,4}{
        \draw[thick] (\x, 6) -- (\x - 2, 4);
    }

    %y=4
    \draw[thick] (2, 4) -- (0, 2);

%%%%labeling
    %y=8
    \node[draw,circle,fill=red!30, inner sep=2pt,minimum size=2pt] at (0,8) {\Large B};
    \node[draw,circle,fill=red!30,inner sep=2pt,minimum size=2pt] at (2,8) {\Large C};
    \node[draw,circle,fill=red!30,inner sep=2pt,minimum size=2pt] at (4,8) {\Large A};
    \node[draw,circle,fill=blue!30,inner sep=2pt,minimum size=2pt] at (6,8) {\Large B};
    \node[draw,circle,fill=blue!30,inner sep=2pt,minimum size=2pt] at (8,8) {\Large C};

%y=6
    \node[draw,circle,fill=red!30,inner sep=2pt,minimum size=2pt] at (0,6) {\Large A};
    \node[draw,circle,fill=blue!30,inner sep=2pt,minimum size=2pt] at (2,6) {\Large B};
    \node[draw,circle,fill=red!30,inner sep=2pt,minimum size=2pt] at (4,6) {\Large C};
    \node[draw,circle,fill=white,inner sep=2pt,minimum size=2pt] at (6,6) {\Large A};

%y=4
    \node[draw,circle,fill=white,inner sep=2pt,minimum size=2pt] at (0,4) {\Large C};
    \node[draw,circle,fill=white,inner sep=2pt,minimum size=2pt] at (2,4) {\Large A};
    \node[draw,circle,fill=blue!30,inner sep=2pt,minimum size=2pt] at (4,4) {\Large B};

    %y=6
    \node[draw,circle,fill=red!30,inner sep=2pt,minimum size=2pt] at (0,2) {\Large B};
    \node[draw,circle,fill=white,inner sep=2pt,minimum size=2pt] at (2,2) {\Large C};

    %y=8
    \node[draw,circle,fill=white,inner sep=2pt,minimum size=2pt] at (0,0) {\Large A};
 
\end{scope}
\end{tikzpicture}%\\(c) The proper coloring
\subcaption{Coloring}
\end{subfigure}
\caption{Illustration of the labeling-coloring approach when there are three agents $A,B,C$. 
%The dark gray, light gray, white vertices in Figure (c) correspond to the divisions where the first, second, and third pieces are the most favorite pieces of an owner agent, respectively. 
The blue, red, and white vertices in Figure (c) correspond to the divisions where the first, second, and third pieces are the most favorite pieces of an owner agent, respectively. 
The three elementary simplices colored in light gray in Figure (c) correspond to fully-colored simplices.
}
\label{fig:Triangulation}
\end{figure*}
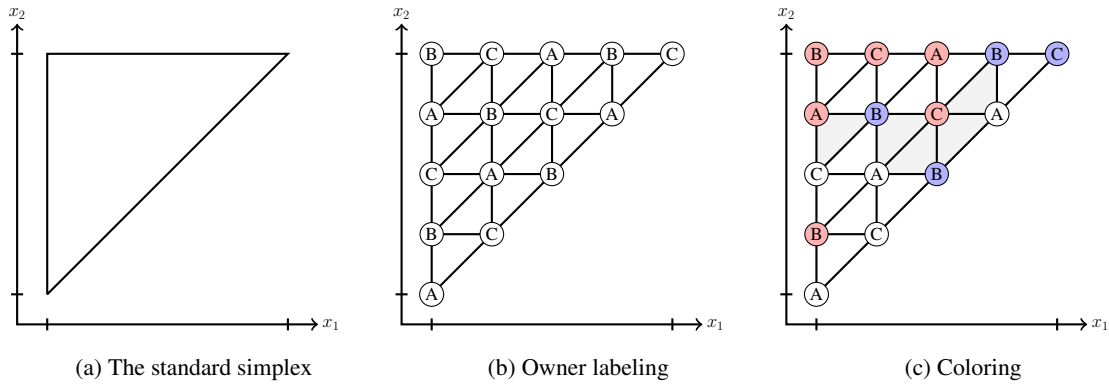%

Given a triangulation $\ttT$ of an $(n-1)$-simplex $S$, a {\em coloring} is a function $\lambda \colon V(\ttT) \rightarrow 2^{[n]}$ that assigns to each vertex $\bfx \in V(\ttT)$ a subset $\lambda(\bfx) \subseteq [n]$, where each element of $[n]$ is (called) a {\em color}. A coloring $\lambda \colon V(\ttT) \rightarrow 2^{[n]}$ is (called) {\em proper} if we can write $S=\langle \bfx_1,\bfx_2,\ldots,\bfx_n \rangle$ such that if a vertex $\bfx \in V(\ttT)$ is colored by index $j$, i.e., $j \in \lambda (\bfx)$, then $\bfx_j$ is a vertex of a minimal face containing $\bfx$. 

A {\em fully-colored} elementary simplex $S^*=\langle \bfx^*_1,\bfx^*_2,\ldots,\bfx^*_n \rangle$ has a complete set of colors, i.e., there exists a permutation $\pi \colon [n]\rightarrow [n]$ such that $\pi(i) \in \lambda(\bfx^*_i)$ for each $i \in [n]$. Sperner's lemma states that a triangulated simplex with a proper coloring admits a fully-colored elementary simplex.

\begin{theorem}[Sperner's lemma]\label{thm:sperner}
Any triangulation $\ttT$ of an $(n-1)$-simplex with a proper coloring $\lambda \colon V(\ttT) \rightarrow 2^{[n]}$ admits a fully-colored elementary simplex.
\end{theorem}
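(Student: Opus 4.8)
The plan is to reduce the set-valued statement to the classical single-valued Sperner lemma and then establish the latter by the standard parity ("door-counting") argument, inducting on the dimension.

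\smallskip\noindent\textbf{Reduction to single-valued labelings.} First I would observe that it suffices to handle colorings that assign each vertex a single color. For each $\bfx\in V(\ttT)$ pick an arbitrary $c(\bfx)\in\lambda(\bfx)$; this uses that every label set is nonempty, which is implicit in the notion of a coloring (without it the statement is plainly false). Since $c(\bfx)\in\lambda(\bfx)$, the single-valued labeling $c$ is again proper with respect to the same ordering $\bfx_1,\ldots,\bfx_n$ of the main vertices: if $c(\bfx)=j$ then $j\in\lambda(\bfx)$, so $\bfx_j$ lies on the minimal face containing $\bfx$. Moreover, any elementary simplex $S^*=\langle\bfx^*_1,\ldots,\bfx^*_n\rangle$ that is fully-colored for $c$, say $c(\bfx^*_i)=\pi(i)$ for a permutation $\pi$, is also fully-colored for $\lambda$, because $\pi(i)=c(\bfx^*_i)\in\lambda(\bfx^*_i)$. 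Hence it is enough to produce a fully-colored elementary simplex for $c$.

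\smallskip\noindent\textbf{Door-counting for single-valued $c$.} I would argue by induction on $n$. The base case $n=1$ is immediate: a $0$-simplex is a single vertex, which properness forces to be labeled $1$, so it is itself fully-colored. For the inductive step, call an $(n-2)$-face of an elementary simplex a \emph{door} if the $n-1$ labels on its vertices are exactly $\{1,\ldots,n-1\}$. A short case analysis on the multiset of $n$ labels of an elementary $(n-1)$-simplex $\sigma$ shows: $\sigma$ has exactly one door when its labels are all distinct (i.e.\ $\sigma$ is fully-colored), exactly two doors when its labels are precisely $\{1,\ldots,n-1\}$ with one value repeated, and no door in every other case. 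Writing $d(\sigma)$ for the number of doors of $\sigma$, it follows that $\sum_\sigma d(\sigma)$ is congruent modulo $2$ to the number of fully-colored elementary simplices.

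\smallskip\noindent\textbf{Boundary count and conclusion.} Every $(n-2)$-face occurring as a face of an elementary simplex lies either in the interior of $S$, where it is shared by exactly two elementary $(n-1)$-simplices, or on $\partial S$, where it belongs to exactly one; hence $\sum_\sigma d(\sigma)\equiv(\text{number of doors on }\partial S)\pmod 2$. Now I would use properness to pin down boundary doors: if a vertex $\bfx$ lies on the facet $F_k$ of $S$ spanned by the main vertices other than $\bfx_k$, then the minimal face containing $\bfx$ is contained in $F_k$, so $c(\bfx)\neq k$. Thus for $k\le n-1$ the facet $F_k$ carries no label $k$ and so contains no door, while on $F_n$ (spanned by $\bfx_1,\ldots,\bfx_{n-1}$) the restricted triangulation together with $c$ is a proper labeling of an $(n-2)$-simplex whose doors are exactly its fully-colored elementary $(n-2)$-simplices; by the induction hypothesis their number is odd. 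Therefore $\sum_\sigma d(\sigma)$ is odd, so the number of fully-colored elementary $(n-1)$-simplices is odd, in particular at least one, which completes the induction and, via the reduction above, the proof.

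\smallskip\noindent\textbf{Main obstacle.} The set-valued reduction is essentially free; the real content is the inductive door argument, where the delicate point is the boundary analysis — using properness to confine boundary doors to the single facet $F_n$ and to verify that the labeling restricted to $F_n$ is again proper, so that the induction hypothesis applies verbatim. A secondary point worth stating explicitly is the standing nonemptiness assumption on the label sets, without which the theorem fails.
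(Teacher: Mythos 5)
Your proof is correct, but note that the paper does not prove Theorem~\ref{thm:sperner} at all: it is invoked as a classical black box (and later the multilabeled generalization of \citet{MeunierSu} is likewise cited, not reproved), so there is no in-paper argument to compare against. What you supply is the standard parity proof: reduce the set-valued coloring to a single-valued labeling by an arbitrary selection $c(\bfx)\in\lambda(\bfx)$, then run the door-counting induction. The reduction is sound, and you are right to flag nonemptiness of the label sets as the hidden hypothesis (in the paper's application it holds automatically, since each $\lambda_i(\bfx)$ is an $\argmax$ over a nonempty set once $m\ge n$). Two small points you lean on implicitly and could state: (i) the inductive claim must be the \emph{oddness} of the number of fully-colored elementary simplices, not mere existence, since that is what you use on the facet $F_n$ (your write-up is consistent with this, but the claim being inducted on should be named); (ii) a boundary $(n-2)$-face of the triangulation lies entirely within a single facet $F_k$ of $S$, and the elementary simplices of $\ttT$ contained in $F_n$ form a triangulation of $F_n$ on which the restricted labeling is again proper with respect to the main vertices $\bfx_1,\ldots,\bfx_{n-1}$ --- you assert this, and it is true, but it is exactly the ``delicate point'' you identify, so a sentence of justification (e.g.\ via barycentric coordinates: if a point of the relative interior of the face has $k$-th coordinate zero, so do all its vertices) would close the argument completely. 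With those remarks made explicit, the proof is a complete and correct derivation of the cited lemma, including its set-valued form as used in the paper.
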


%explain how to prove existence of EF in cake cutting
\citet{Su1999} was the first to demonstrate the usefulness of Sperner's lemma in the context of cake-cutting, citing Simmons as the one who conceptualized the proof. The Simmons--Su method encoded possible divisions of the cake $[0,1]$ by the points of an $(n-1)$ standard simplex where the $i$-th coordinate can be interpreted as the $i$-th knife position, obtaining an envy-free division based on a {\em labeling} and {\em coloring} of the $(n-1)$ standard simplex as follows.

First, %the $(n-1)$ standard simplex is triangulated to admit an {\em owner labeling}.
we assign an {\em agent label} to each vertex so that the vertices of each elementary simplex have $n$ distinct agent labels. 
Formally, an {\em owner labeling} of the triangulation $\ttT$ of an $(n-1)$-simplex is a function $a \colon V(\ttT) \rightarrow [n]$ such that for each pair of distinct vertices $\bfx_i$ and $\bfx_j$ of $S$ with $i \neq j$, $a(\bfx_i) \neq a(\bfx_j)$, where each $a(\bfx)$ is called the {\em owner} of a vertex. There is a triangulation of the simplex that does not admit an owner labeling; see, e.g., Figure $1$ in \citet{Deng2012}. Nevertheless, some triangulations, e.g., Kuhn's triangulation and the barycentric triangulation, do admit owner labelings, while allowing small mesh size~\citep{Su1999,Deng2012}.

Second, since the vertices of the triangulation correspond to divisions of the cake, we go to each vertex of the triangulation and ask the owner to {\em color} that vertex with the indices of the most preferred piece of the corresponding division. An illustration of a triangulation and the labeling-coloring approach is given in Figure~\ref{fig:Triangulation}. Such a coloring is proper if each agent never chooses the piece of zero-length. For example, when $n=3$, the corner vertices have distinct colors as they correspond to divisions in which one piece includes the entire cake; further, the vertices on each side correspond to divisions in which one piece is empty, thus missing one color that corresponds to the empty piece. By Sperner’s lemma, this construction yields a fully-colored elementary simplex $S^*$. Now, observe that in $S^*$, each agent points out a different piece of her owned division as a most preferred piece. If the simplex vertices are close enough, it corresponds to an approximate solution that converges to an envy-free contiguous division for a sequence of finer triangulations. In the next section, we will adopt the Simmons--Su method to the setting of discrete cake-cutting.

\section{Existence of EF1 for any number of agents}\label{sec:EF1}
In this section, we prove the main result of this paper. 

\begin{theorem}\label{thm:EF1:general}
For any number of agents with monotone valuation functions on a path, a connected EF1$_{\emph outer}$ division exists.
\end{theorem}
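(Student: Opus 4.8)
The plan is to follow the Simmons--Su / Bil\`o et al.\ template via Sperner's lemma, but with a carefully \emph{asymmetric} labeling and coloring that makes the rounding work for all $n$. Encode a division of the path $(1,\dots,m)$ into $n$ connected bundles by the positions of $n-1$ cut points. The subtlety of the discrete setting is that a cut may fall ``between'' integer positions, so each vertex of a fine triangulation of $\Delta^{n-1}$ should be mapped to an \emph{integral} division by rounding each fractional cut; I would adopt the rounding convention that a left-leaning cut is pulled \emph{left} and a right-leaning cut is pulled \emph{right}, so that from the viewpoint of the $j$-th bundle, the left endpoint is given away pessimistically (the agent imagines she might lose it) while the right endpoint is claimed optimistically (she imagines she keeps it). Concretely, when the owner of a vertex colors it, she reports as her favorite bundle the index $j$ maximizing $v_i$ of the ``pessimistic on the left, optimistic on the right'' version of bundle $j$. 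Use Kuhn's triangulation with an owner labeling (which exists and has arbitrarily small mesh, as recalled in the excerpt), take a sequence of triangulations with mesh $\to 0$, apply Sperner's lemma (Theorem~\ref{thm:sperner}) to get a fully-colored elementary simplex in each, and pass to a convergent subsequence of these shrinking simplices, which collapses to a single limiting division $\calI=(I_1,\dots,I_n)$ together with a permutation $\pi$ assigning to agent $i$ the bundle index $\pi(i)$ she pointed at.

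Next I would argue that the limiting division $\calI$ with the permutation $\pi$ is EF1$_{\emph outer}$, i.e.\ $v_i(I_{\pi(i)}) \ge v_i^-(I_{\pi(j)})$ for all $i,j$. In the shrinking fully-colored simplex, all $n$ vertices represent divisions that differ from $\calI$ only at cut points that lie within one unit of the limiting cuts; hence each vertex's $j$-th bundle differs from $I_j$ only by possibly including/excluding one outer (endpoint) item on the left and/or on the right. The point of the asymmetric rounding is exactly this: agent $i$, owner of the vertex where she chose index $\pi(i)$, weakly prefers her version of bundle $\pi(i)$ to \emph{every} version of bundle $\pi(j)$ she could have been offered at that vertex; and because she was optimistic about the right end and pessimistic about the left end, her version of $I_{\pi(i)}$ is contained in (or equals) $I_{\pi(i)}$ plus at most one item, while her version of $I_{\pi(j)}$ contains $I_{\pi(j)}$ minus at most one \emph{outer} item. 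Monotonicity of $v_i$ then upgrades these comparisons to $v_i(I_{\pi(i)}) \ge v_i(I_{\pi(j)} \setminus \{g\})$ for an outer good $g$ (or no removal needed), and $I_{\pi(j)} \setminus \{g\}$ is connected because $g$ is an endpoint of a path-bundle. This yields EF1$_{\emph outer}$.

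There are a few routine points to discharge: that the coloring is \emph{proper} (a vertex on a face where bundle $j$ is forced empty cannot have been colored $j$, since $v_i(\emptyset)=0$ and $m\ge n$ guarantees some nonempty bundle is weakly better — here one must be slightly careful that the rounding never creates a spurious nonempty bundle at a boundary vertex, which the ``left-pessimistic'' convention ensures); that owner labelings with vanishing mesh exist (cited); and the compactness argument extracting a convergent subsequence of simplices and of the finitely many possible (division, $\pi$) labels. The main obstacle — and the heart of the argument — is the asymmetric bookkeeping in the second paragraph: showing that the at-most-one-item slack on each side, combined across the left and right endpoints of two \emph{different} bundles $\pi(i)$ and $\pi(j)$ held by the same agent $i$, still collapses to a \emph{single} removed outer good rather than two. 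This is precisely where the symmetric rounding of Bil\`o et al.\ breaks down for $n\ge 5$ (the discussion promised in Section~\ref{sec:potential}), and the resolution is that prioritizing left-wanting agents over right-wanting agents on the simplex forces the relevant discrepancies to occur on a single side.
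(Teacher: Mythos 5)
Your overall template (Sperner's lemma on a Kuhn-triangulated simplex, an owner labeling, and a deliberately left--right asymmetric evaluation of bundles) is the same as the paper's, but two steps that you treat as routine or defer are exactly where the content lies, and as written they do not go through. First, the limit argument: the map from a vertex of the triangulation to an integral division is piecewise constant, and a fully-colored elementary simplex necessarily straddles the discontinuities of that map (otherwise all its vertices would induce the same division and the same coloring could not be fully colored in general). So when the mesh shrinks, the simplex collapses to a point lying on such a boundary, and its $n$ vertices correspond to up to $n$ \emph{different} divisions; there is no canonical ``single limiting division $\calI$'' --- choosing one is precisely the rounding problem, not a consequence of compactness. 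The paper sidesteps limits entirely: it works once and for all on the fixed half-integral triangulation $\ttT_{\half}$ of $S_m$, where the balancedness property~\eqref{eq:sperner:kuhn-triangulation} guarantees the $n$ partial divisions of a fully-colored simplex differ by single half-step knife moves, and then converts them into one division by an explicit left-to-right priority rule (Algorithm~\ref{alg:rounding}).

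Second, the step you yourself flag as ``the main obstacle'' --- collapsing the potential two-sided, two-item slack into a single outer-good removal --- is asserted, not proved, and your literal evaluation rule breaks it. If the owner is ``optimistic on the right'' in the naive sense of valuing bundle $j$ as $v_i(B_j\cup\{y^j\})$ whenever the right boundary item is hidden, then in the configurations where the final rounding must deny bundle $j$ the item $y^j$ (the paper's case~3, where some vertex has both knives hugging $B_j$, so that condition $(b)$ of Line~\ref{line:if} fails), the agent assigned $j$ receives only $B_j$ and may value it strictly less than what she pointed at; the chain $v_i(I^*_{\pi(i)})\ge \hat v_i(\bfx^*_i,\pi(i))\ge \hat v_i(\bfx^*_i,j)\ge v_i^-(I^*_j)$ then fails at its first link. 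The paper's fix is that optimism about the right end is tempered by an up-to-one-outer-good operator: the virtual valuation uses $v_i^-(I_j(\bfx)\cup\{r_j(\bfx)\})$ (and $v_i^-(I_j(\bfx)\cup\{\ell_j(\bfx),r_j(\bfx)\})$ when both ends are hidden), and the whole argument rests on the sandwich Lemma~\ref{lem:ef1:approx-correct}, $v_i(I^*_j)\ge \hat v_i(\bfx_k,j)\ge v_i^-(I^*_j)$ for \emph{every} vertex $\bfx_k$ and \emph{every} bundle $j$, verified by a case analysis over all interactions between the knife positions, the boundary items $y^{j-1},y^j$, and the outcome of the rounding rule. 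Without specifying the exact virtual valuations, the exact rounding algorithm, and that case analysis, the proposal does not yet constitute a proof.
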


Let us first illustrate potential approaches to prove the existence of a connected EF1$_\emph{outer}$ division. We will then proceed to identifying some of the problems with these approaches, and explain how we overcome them.

\subsection{Potential approach to prove Theorem~\ref{thm:EF1:general}}\label{sec:potential}

One possible approach to prove Theorem~\ref{thm:EF1:general} is to treat a path $(1,2,\ldots,m)$ as an interval $[0,m]$, extend valuation functions to continuous ones, and apply some rounding of an envy-free connected division of the cake, which is guaranteed to exist~\citep{Su1999,Woodall1980,Stromquist80}. This approach was also investigated by \citet{Suksompong2019}; however, a division obtained via this approach satisfies a weaker fairness property. In particular, it can lead agents to {steal} an item instead of removing it from another agent's bundle to eliminate envy; see Section $5$ of \citet{BiloCaFl19} for details.

Instead, \citet{BiloCaFl19} followed the approach of \citet{Su1999} that directly works on the simplex and rounds an elementary simplex, corresponding to a series of divisions. Like \citet{Su1999}, they encoded possible configurations of the $n-1$ knives as the vertices of a triangulated simplex. 
%The resulting coloring is proper, so by Sperner's lemma, we get a fully colored elementary simplex $S^*$ that may be rounded into a fair allocation.  
%So is this approach sufficient to ensure EF1? 
Bil\`{o} et al. observed that if the configuration space only considers fully integral divisions, i.e., if knives move from one edge to another edge, the divisions corresponding to the vertices in each elementary simplex are too far apart from each other to ensure EF1: in their final rounding, one agent may get one additional item together with her desired bundle while another may lose one item, which makes it difficult to bound the envy up to one item. Thus, they consider a finer triangulation, allowing knives to move at both edges and vertices of a path. More precisely, they consider the following simplex: 
\[
S_m:= \left\{ \bfx \in \bbR^{n-1}_{+} \middle| \frac{1}{2} \le x^1 \le x^2 \le \cdots \le x^{n-1} \le m +\frac{1}{2}  \right\}, 
\]
and use a {\em Kuhn's triangulation} $\ttT_{\half}$ of $S_m$ \citep{Deng2012} where the vertices $V(\ttT_{\half})$ are given by 
\[
V(\ttT_{\half})= \left\{  \bfx \in \bbR^{n-1}_{+}  \middle| x^i  \in  \{\frac{1}{2},1, \ldots, m +\frac{1}{2} \}, \forall i \in [n]  \right\}
\]
and it satisfies the property that each elementary simplex $S=\langle \bfx_1,\bfx_2,\ldots,\bfx_n \rangle$ of $\ttT_{\half}$ is balanced, meaning that there exists a permutation $\phi \colon [n] \rightarrow [n]$ such that 
\begin{align}\label{eq:sperner:kuhn-triangulation}
\bfx_{\phi(i+1)}=\bfx_{\phi(i)}+\frac{1}{2} \bfe^{\phi(i)},~\mbox{for each}~i \in [n-1]. 
\end{align}

The above property ensures that the vertices of the elementary simplex can be arranged in such a way that every distinct knife in this sequence moves always in half-step and in the same direction and such a movement happens exactly once over the sequence. 

For each vertex $\bfx=(x^1,x^2,\ldots,x^{n-1}) \in V(\ttT_{\half})$, we call each $x^j$ the $j$-th {\em knife position}; item $y \in \{ 1, 2, \dots, m \}$ is {\em hidden} by a knife $x^j$ if $y=x^j$. Here, $V(\ttT_{\half})$ %corresponds to the set of divisions of a path with some items being {\em hidden} by knives, allowing the knives to move in half-steps.
encodes all the possible configurations $(x^1,x^2,\ldots,x^{n-1})$ of the $n-1$ knives that move in half-steps. Specifically, each vertex $\bfx \in V(\ttT_{\half})$ induces a partial division $\calI({\bfx})=(I_1(\bfx),I_2(\bfx),\ldots,I_n(\bfx))$ where each $j$-th bundle for $j \in [n]$ is given by
\[
I_j(\bfx) = \,\{ y \in \{ 1, 2, \dots, m \}  \mid x^{j-1} < y < x^j \,\}, 
\]
setting $x^0 = \frac12$ and $x^n = m+\frac12$. For example, when $m=12$, $n=5$, and $\bfx=(3,\frac{9}{2},8,\frac{21}{2})$, $\calI({\bfx})$ corresponds to the partial division $(\{1,2\},\{4\},\{5,6,7\},\{9,10\},\{11,12\})$. Items $3$ and $8$ are hidden by the two knives $x^2$ and $x^3$ and the other items are uncovered at $\bfx$. For each $\bfx \in V(\ttT_{\half})$ and each $j \in [n]$, we say that $\ell_{j}(\bfx) \coloneqq \floor{x^{j-1}+\frac{1}{2}}$ is the {\em left-most boundary item} of $I_j(\bfx)$; similarly, we say that $r_{j}(\bfx) \coloneqq \ceil{x^j-\frac{1}{2}}$ is the {\em right-most boundary item} of $I_j(\bfx)$. Note that if $\ell_{j}(\bfx)$ is hidden by the $(j-1)$-th knife, it is the boundary item between $I_{j-1}(\bfx)$ and $I_j(\bfx)$ and if not, it is the leftmost item of $I_j(\bfx)$; similarly, if $r_{j}(\bfx)$ is hidden by the $j$-th knife, it is the boundary item between $I_{j}(\bfx)$ and $I_{j+1}(\bfx)$, and if not, it is the rightmost item of $I_j(\bfx)$. 
We say that item $y \in \{ 1, 2, \dots, m \}$ {\em fully appears} (or, is {\em fully visible}) in $I_j(\bfx)$ if $y \in I_j(\bfx)$.

%\subsubsection{Coloring}
How does each agent evaluate each of the partial divisions $\calI({\bfx})$? Bil\`{o} et al. introduced the following \emph{virtual} valuations: for a bundle $I_j({\bfx})$ in which at least one of the boundary items is fully visible, agents expect to obtain only those items that are fully visible in the bundle; for other bundles $I_j({\bfx})$ in which no boundary item fully appears, agents expect to obtain the items that are fully visible in the bundle as well as at least one of the boundary items (choose one that is less valuable). Agents then color each of the vertices with the index of the favorite bundles based on virtual valuations. This coloring can be shown to be proper; thus, using a more general version of Sperner's lemma considering $n$ colorings, we get a fully-colored simplex $S^*$ that is fully-labeled, meaning that $S^*$ receives different agent labels that like different bundles best.\footnote{Because Kuhn's triangulation admits an owner labeling, instead of using a general version of Sperner's lemma, the labeling-coloring approach of \citet{Su1999} also shows the existence of $S^*$. In our proof of the next subsection, we use this approach to construct a proper coloring as the proof becomes slightly more elementary.}

%\subsubsection{Rounding}
Note that each vertex in $S^*$ only induces a partial division. How can we use $S^*$ to obtain a full division? %\citet{BiloCaFl19} construct a final division $\mathcal{I}$ for four agents as follows. 
Bil\`{o} et al. showed that $S^*$ can be rounded to yield an EF1$_{\emph outer}$ connected division for four or fewer agents as follows. 
First of all, recall that for each elementary simplex $S=\langle \bfx_1,\bfx_2,\ldots,\bfx_n \rangle$ of $\ttT_{\half}$, there exists a permutation $\phi:[n] \rightarrow [n]$ satisfying \eqref{eq:sperner:kuhn-triangulation}. 
Thus, $S$ corresponds to a sequence of partial divisions $(\calI({\bfx_{\phi(1)}}),\calI({\bfx_{\phi(2)}}),\ldots,\calI({\bfx_{\phi(n)}}))$, where each partial division $\calI({\bfx_{\phi(k)}})$ is obtained from $\calI({\bfx_{\phi(k-1)}})$ by moving one of the knives in half-step and the movement of each knife occurs only once across $n$ partial divisions. An example of such sequences of partial divisions is shown in Figure~\ref{fig:Ix}. Using this, our path can be divided into $(B_1,y^1,B_2,\ldots,y^{n-1},B_n)$ where each $B_j$ is the set of items that fully appear in the $j$-th bundle of all partial divisions, and each $y^j$ is the boundary item that can appear in both of the $j$-th and the $(j+1)$-th bundles over the sequence.

Building $(B_1,y^1,B_2,\ldots,y^{n-1},B_n)$ for $S^*$, a division $\mathcal{I}^*=(I^*_1,I^*_2,I^*_3,I^*_4)$ for four agents can be constructed as follows: 
\begin{enumerate}
    \item Each $I^*_j$ of interior bundles with $j \in \{2,3\}$ consists of $B_j$ together with the boundary items $y^{j-1}$ or $y^j$ fully appearing in the $j$-th bundle of some partial division represented by the vertices in $S^*$. If none of the boundary items $y^{j-1}$ or $y^j$ fully appears in the $j$-th bundle, $I^*_j$ additionally gets one boundary item that is adjacent to an exterior bundle $I^*_j$ with $j \in \{1,4\}$. 
    \item Each exterior bundle $I^*_1$ (respectively, $I^*_4$) consists of $B_1$ together with $y_1$ (respectively, $y_3$) if $y_1$ (respectively, $y_3$) is not allocated yet. 
\end{enumerate}
With this approach, Bil\`{o} et al. established an EF1$_{\emph outer}$ connected division for four or fewer agents. However, the four-agent proof in Bil\`{o} et al. does not extend to the general case. Their proof requires that each interior bundle receives an item traversed by a knife due to the left-right symmetry in their valuations. This requirement is met by four agents as the interior bundles (i.e., the second and third bundles) are adjacent to some exterior bundle (i.e., the first and fourth bundles). When the number of agents is five or more, any division includes a bundle that is not adjacent to any of exterior bundles.  

\begin{figure*}[htb]
\centering
\begin{tikzpicture}[xscale=1]

\draw[decorate,decoration = {calligraphic brace}, thick] (0.8,1) -- node[above=1ex]{$B_1$} (2.2,1);

\draw[decorate,decoration = {calligraphic brace}, thick] (3.7,1) -- node[above=1ex]{$B_2$} (4.2,1);

\draw[decorate,decoration = {calligraphic brace}, thick] (5.7,1) -- node[above=1ex]{$B_3$} (7.2,1);

\draw[decorate,decoration = {calligraphic brace}, thick] (8.7,1) -- node[above=1ex]{$B_4$} (10.2,1);

\draw[decorate,decoration = {calligraphic brace}, thick] (11.7,1) -- node[above=1ex]{$B_5$} (12.2,1);

\node at (3,1.5) {$y^1$};
\node at (5,1.5) {$y^2$};
\node at (8,1.5) {$y^3$};
\node at (11,1.5) {$y^4$};

\begin{scope}[yshift=0cm]
\draw[-latex,thick,gray] (4.5,0) -- (4.7,-0.4);
\draw[thick,-] (1,0.5) node[anchor=east] {$\calI(\bfx_{\phi(1)})$~~~~~~~~}  -- (12,0.5);
\foreach \x in {1,2,3,4,5,6,7,8,9,10,11,12}
   \node[fill=white,draw, circle,inner sep=3pt,font=\tiny] (o\x) at (\x, 0.5){$\x$};
   
\draw[ultra thick, gray] (3,0.8) node[above] {$x^1$} -- (3,0.2);
\draw[ultra thick, gray] (4.5,0.8) node[above] {$x^2$} -- (4.5,0.2);
\draw[ultra thick,gray] (8,0.8) node[above] {$x^3$} -- (8,0.2);
\draw[ultra thick,gray] (10.5,0.8) node[above] {$x^4$} -- (10.5,0.2);
%\node[fill=black, shape=circle, inner sep=2pt, label=above:{1}, label={[font=\tiny,blue]below:$1$}] at (1,0) {};
%\node[fill=black, shape=circle, inner sep=2pt, label=above:{4,5,6}, label={[font=\tiny,blue]below:$4$}] at (4,0) {};
\end{scope}

\begin{scope}[yshift=-1.5cm]

\draw[-latex,thick,gray] (8,0) -- (8.2,-0.4);
\draw[thick,-] (1,0.5) node[anchor=east] {$\calI(\bfx_{\phi(2)})$~~~~~~~~}  -- (12,0.5);
\foreach \x in {1,2,3,4,5,6,7,8,9,10,11,12}
   \node[fill=white,draw, circle,inner sep=3pt,font=\tiny] (o\x) at (\x, 0.5){$\x$};

\draw[ultra thick, gray] (3,0.8) node[above] {$x^1$} -- (3,0.2);
\draw[ultra thick, gray] (5,0.8) node[above] {$x^2$} -- (5,0.2);
\draw[ultra thick,gray] (8,0.8) node[above] {$x^3$} -- (8,0.2);
\draw[ultra thick,gray] (10.5,0.8) node[above] {$x^4$} -- (10.5,0.2);
\end{scope}

\begin{scope}[yshift=-3cm]
\draw[-latex,thick,gray] (10.5,0) -- (10.7,-0.4);
%\draw[thick,->] (0,0) -- (13,0) node[anchor=north west] {$x^i$};

\draw[thick,-] (1,0.5) node[anchor=east] {$\calI(\bfx_{\phi(3)})$~~~~~~~~} -- (12,0.5);
\foreach \x in {1,2,3,4,5,6,7,8,9,10,11,12}
   \node[fill=white,draw, circle,inner sep=3pt,font=\tiny] (o\x) at (\x, 0.5){$\x$};

\draw[ultra thick, gray] (3,0.8) node[above] {$x^1$} -- (3,0.2);
\draw[ultra thick, gray] (5,0.8) node[above] {$x^2$} -- (5,0.2);
\draw[ultra thick,gray] (8.5,0.8) node[above] {$x^3$} -- (8.5,0.2);
\draw[ultra thick,gray] (10.5,0.8) node[above] {$x^4$} -- (10.5,0.2);
\end{scope}

\begin{scope}[yshift=-4.5cm]

\draw[-latex,thick,gray] (3,0) -- (3.2,-0.4);
\draw[thick,-] (1,0.5) node[anchor=east] {$\calI(\bfx_{\phi(4)})$~~~~~~~~} -- (12,0.5);
\foreach \x in {1,2,3,4,5,6,7,8,9,10,11,12}
   \node[fill=white,draw, circle,inner sep=3pt,font=\tiny] (o\x) at (\x, 0.5){$\x$};

\draw[ultra thick, gray] (3,0.8) node[above] {$x^1$} -- (3,0.2);
\draw[ultra thick, gray] (5,0.8) node[above] {$x^2$} -- (5,0.2);
\draw[ultra thick,gray] (8.5,0.8) node[above] {$x^3$} -- (8.5,0.2);
\draw[ultra thick,gray] (11,0.8) node[above] {$x^4$} -- (11,0.2);
\end{scope}

\begin{scope}[yshift=-6cm]
\draw[thick,->] (0,0) -- (13,0) node[anchor=north west] {$x^i$};

\draw[thick,-] (1,0.5) node[anchor=east] {$\calI(\bfx_{\phi(5)})$~~~~~~~~}  -- (12,0.5);
\foreach \x in {1,2,3,4,5,6,7,8,9,10,11,12}
   \node[fill=white,draw, circle,inner sep=3pt,font=\tiny] (o\x) at (\x, 0.5){$\x$};

\foreach \x in {1,2,3,4,5,6,7,8,9,10,11,12}
   \draw (\x cm,1pt) -- (\x cm,-1pt) node[anchor=north,font=\tiny] {$\x$};

\draw[ultra thick, gray] (3.5,0.8) node[above] {$x^1$} -- (3.5,0.2);
\draw[ultra thick, gray] (5,0.8) node[above] {$x^2$} -- (5,0.2);
\draw[ultra thick,gray] (8.5,0.8) node[above] {$x^3$} -- (8.5,0.2);
\draw[ultra thick,gray] (11,0.8) node[above] {$x^4$} -- (11,0.2);
\end{scope}

\end{tikzpicture}
\caption{Example of a sequence of partial divisions represented by an elementary simplex of $\ttT_{\half}$. Each knife moves in half-step and the movement happens only once over the sequence.}\label{fig:Ix}
\end{figure*}
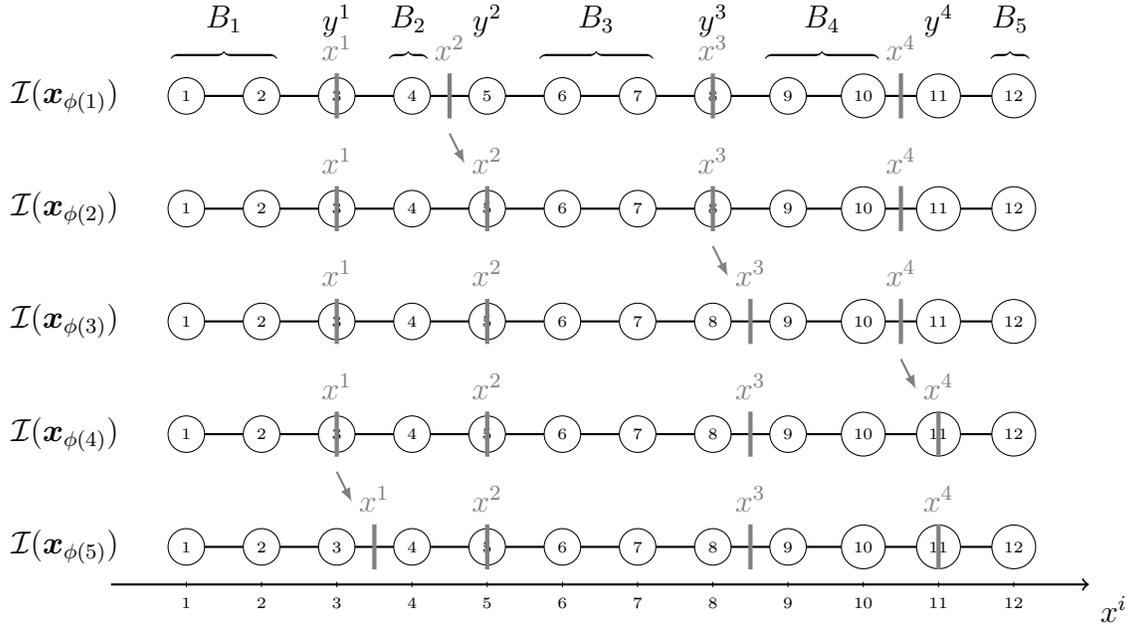%
\begin{figure*}[htb]
\centering
\begin{tikzpicture}[xscale=1]
\node at (0,0) {};
\begin{scope}[xshift=1cm,yshift=0cm]
\draw[thick,-] (1,0.5) node[anchor=east] {$\calI^*$~~~~~~~~}  -- (12,0.5);
\foreach \x in {1,2,3,4,5,6,7,8,9,10,11,12}
\node[fill=white,draw, circle,inner sep=3pt,font=\tiny] (o\x) at (\x, 0.5){$\x$};

\draw [draw=black] (0.5,0.2) rectangle (3.5,0.8);
\draw [draw=black] (3.5,0.2) rectangle (5.5,0.8);
\draw [draw=black] (5.5,0.2) rectangle (8.5,0.8);
\draw [draw=black] (8.5,0.2) rectangle (10.5,0.8);
\draw [draw=black] (10.5,0.2) rectangle (12.5,0.8);

\foreach \x in {1,2,3,4,5,6,7,8,9,10,11,12}
   \draw (\x cm,1pt) -- (\x cm,-1pt) node[anchor=north,font=\tiny] {$\x$};
\draw[thick,->] (0,0) -- (13,0) node[anchor=north west] {$x^i$};
\end{scope}
\end{tikzpicture}
\caption{Division returned by Algorithm~\ref{alg:rounding} for an elementary simplex whose vertices correspond to the partial divisions of Figure~\ref{fig:Ix}. Note that the second bundle $I^*_2$ receives the boundary item $5$ because it satisfies the condition $($b$)$ of Line $9$ while the fourth bundle $I^*_4$ does not receive item $11$ because it violates the condition $($b$)$ of Line $9$.}\label{fig:IxFull}
\end{figure*}
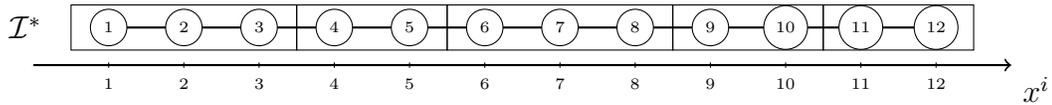

\subsection{Our approach: Proof of Theorem \ref{thm:EF1:general}}
Following the discretization approach developed in \citet{BiloCaFl19}, we prove that an EF1 connected division exists for any number of agents with monotone valuations. When extending the result beyond four agents, the main difficulty is to find appropriate ways to evaluate bundles of partial divisions and to round the half-integral simplices. To this end, we create a left-right asymmetry both in the evaluation phase and in partial-division rounding. 
%\todo[inline]{add more details}
Intuitively, our rounding algorithm, formalized in Algorithm~\ref{alg:rounding}, prioritizes a left agent (i.e., an agent who is assigned to $j$-th bundle) over a right agent (i.e., an agent who is assigned to $(j+1)$-th bundle) when allocating each boundary item $y^j$. 
Our definition of virtual valuations allows us to do such rounding since each agent is pessimistic about obtaining the left-most boundary item and optimistic about obtaining the right-most boundary item. This then guarantees that the value of each agent's allocated bundle in the final division $\mathcal{I}^*$ is at least the value of the favorite bundle in their owned division, which is at least the value of another bundle in $\mathcal{I}^*$ after the removal of one outer-item. Below, we prove that our technique successfully ensures that the final division is EF1$_{\emph outer}$.

Our proof is divided into the following two steps of coloring and rounding: first, we assign to each vertex an owner labeling and a color according to the preferences of owners; second, we round a fully-colored simplex into a full division.
See Figure~\ref{fig:Triangulation} for an illustration. 
We use the same simplex $S_m$ and triangulation $V(\ttT_{\half})$ as previously described. 
%Intuitively, $\ell_{j}(\bfx)$ is the boundary item between $I_{j-1}(\bfx)$ and $I_j(\bfx)$ if it is hidden by the $j-1$st knife and the leftmost item of $I_j(\bfx)$ otherwise; $r_{j}(\bfx)$ is the boundary item between $I_{j}(\bfx)$ and $I_{j+1}(\bfx)$ if it is hidden by the $j$-th knife and the rightmost item of $I_j(\bfx)$ otherwise.

\subsubsection{Coloring}
We define the virtual valuation $\hat v_i(\bfx, j)$ of each vertex $\bfx$ of the triangulation. The virtual valuation determines how each owner agent assigns a color to his or her owned vertex. %Intuitively, agents are pessimistic about obtaining the left-most boundary item, but optimistic about obtaining the right-most one. 
First, an agent who obtains the left exterior bundle $j=1$ expects to obtain the items not hidden by the right-most knife:
\begin{align*}
%\label{eq:virtual1}
\hat v_i(\bfx, 1) &=v_i(I_1(\bfx)). 
\end{align*}

For the interior bundles $2\le j \le n-1$, we set ${\hat v}_i(\bfx, j)=0$ if $I_j(\bfx) = \emptyset$. Otherwise, %we have the following four cases: 
%\begin{itemize}
%\item an agent expects to obtain $I_j(\bfx)$ and one boundary item if both of them are hidden by a knife, 
%\item an agent expects to obtain $I_j(\bfx)$ and the right-most boundary item $r_{j}(\bfx)$ if none of the boundary items are hidden by a knife, 
%\item an agent expects to obtain $I_j(\bfx)$ and one boundary item if the right-most boundary item is hidden by a knife but the left-most boundary item is not hidden, and
%\item an agent expects to obtain $I_j(\bfx)$ if the left-most boundary item is hidden by a knife but the right-most boundary item is not hidden. 
%\end{itemize}
%Note that according to the second bullet point, we have ${\hat v}_i(\bfx, j)=0$ if $x^{j-1}+\frac{1}{2} = x^j-\frac{1}{2}=k$ for some $k \in [m]$. %In a more formal way, 
the value ${\hat v}_i(\bfx, j)$ is given as follows: 
\begin{align*}
\hat v_i(\bfx, j) &=
\begin{cases}
v_i^-(I_j(\bfx)\cup \{\ell_{j}(\bfx), r_{j}(\bfx)\}) &
\text{if $\ell_{j}(\bfx) \not \in I_j(\bfx)$ and $r_{j}(\bfx) \not \in I_j(\bfx)$},\\%\text{if $x^{j-1} \in \mathbb Z$ and $x^j \in \mathbb Z$,} \\
v_i(I_j(\bfx) \setminus \{\ell_{j}(\bfx)\}) &
\text{if $\ell_{j}(\bfx) \in I_j(\bfx)$ and $r_{j}(\bfx) \in I_j(\bfx)$},\\%$x^{j-1} \not \in \mathbb Z$ and $x^j \not \in \mathbb Z$}, \\
v_i^-(I_j(\bfx) \cup \{r_{j}(\bfx)\}) &
\text{if $\ell_{j}(\bfx)\in I_j(\bfx)$ and $r_{j}(\bfx) \not \in I_j(\bfx)$},\\%\text{if $x^{j-1} \not \in \mathbb Z$ and $x^j \in \mathbb Z$,} \\
v_i(I_j(\bfx)) &
\text{if $\ell_{j}(\bfx) \not \in I_j(\bfx)$ and $r_{j}(\bfx) \in I_j(\bfx)$}.\\ %\text{if $x^{j-1} \in \mathbb Z$ and $x^j \not \in \mathbb Z$,} \\
\end{cases}
\end{align*}

For the right exterior bundle $j=n$, the item $\ell_n(\bfx)$ is not expected in the final bundle $I^*_j$: %, even though it may belong to the observed bundle $I_I_j$. 
\begin{align*}
\hat v_i(\bfx, n) &=v_i(I_n(\bfx)\setminus \{\ell_n(\bfx)\})
\end{align*}

Based on these virtual valuations, we define coloring functions ${\lambda}_i \colon V(\ttT_\half) \to 2^{[n]}$ for each agent $i \in [n]$ where 
\[ 
{\lambda}_i(\bfx) = \argmax \{\, {\hat v}_i(\bfx, j) \mid j \in [n]~\mbox{such that}~I_j(\bfx) \neq \emptyset \,\}. 
\]
It is not difficult to see that these colorings are proper. It is known that Kuhn's triangulation admits an owner labeling $a \colon V(\ttT_{\half}) \rightarrow [n]$ where each elementary simplex has distinct owner labels~\citep{Deng2012}. 
We aggregate the colorings according to the preference of each owner as follows: 
\[
{\lambda}(\bfx)= {\lambda}_{a(\bfx)}(\bfx). 
\]
Because each coloring ${\lambda}_i$ is proper, so is ${\lambda}$. 

%%%%%%%%%%%Rounding%%%%%%%%%%%%%%%%%
\subsubsection{Rounding}
Next, we present how to round each elementary simplex $S=\langle \bfx_1,\bfx_2,\ldots,\bfx_n \rangle$ of $\ttT_{\half}$ into a division $\mathcal{I}^*=(I^*_1,I^*_2,\ldots,I^*_n)$ as specified in Algorithm \ref{alg:rounding}. Initially, in Line~\ref{line:initialization}, each $I^*_j$ is allocated to the set $B_j$ of items that fully appear in all the $j$-th bundles $I_j(\bfx_1),\ldots,I_j(\bfx_n)$ of partial partitions represented by the elementary simplex $S$. In Lines \ref{line:boundary1} -- \ref{line:boundary2}, Algorithm \ref{alg:rounding} allocates, from left to right, each boundary item $y^j$ that appears in both $j$th and $j+1$st bundles on the sequence $I_j(\bfx_1),\ldots,I_j(\bfx_n)$. Specifically, we use the following left-right asymmetric rounding (Line~\ref{line:boundary:condition}): each bundle $I^*_j$ obtains item $y^{j}$ if 
\begin{itemize}
    \item[(a)] $y^{j}$ fully appears in some of the $j$-th bundles $I_j(\bfx_1),\ldots,I_j(\bfx_n)$, or
    \item[(b)] $I^*_j$ does not obtain item $y^{j-1}$ in the previous step and none of the $I_j(\bfx_k)$ coincides with $B_j$ for $k \in [n]$.
\end{itemize}
%\todo[inline]{Explain Algorithm 1}
In this way, each interior bundle $I^*_j$ must receive at least one of its adjacent boundary items, except when %none of the left-most and right-most boundary items is fully visible in some of the $j$-th bundles $I_j(\bfx_1),\ldots,I_j(\bfx_n)$ and 
$I_j(\bfx_k)$ coincides with $B_j$ for some $k \in [n]$. See Figure~\ref{fig:IxFull} for an example of the division returned by Algorithm~\ref{alg:rounding}.

%\todo[inline]{See Example ??}

\begin{algorithm}[ht]
	\caption{Rounding into a division}
	\label{alg:rounding}
	\begin{algorithmic}[1]
		\REQUIRE an elementary simplex $S=\langle \bfx_1,\bfx_2,\ldots,\bfx_n \rangle$ of $\ttT_{\half}$
		\ENSURE a division $\calI^*=(I^*_1,I^*_2,\ldots,I^*_n)$ of the path $(1,2,\ldots,m)$
		\STATE for each $j \in [n]$, let $B_j=I_j(\bfx_1) \cap I_j(\bfx_2) \cap \cdots \cap I_j(\bfx_n)$ and let $y^j$ be the item such that $y^j=x^j_k$ for some $k \in [n]$
		\STATE initialize $I^*_j \leftarrow B_j$ for each $j=1,2,\ldots,n$\label{line:initialization}
		\IF{$y^1$ fully appears in the first bundle of some division, i.e., $y^1 \in I_1(\bfx_k)$ for some $k$}\label{line:boundary1}
			\STATE $I^*_1 \leftarrow I^*_1 \cup \{y^1\}$
		\ENDIF
		\FOR{$j = 2,\dots,n-1$}
			\IF{$y^{j-1}$ is unallocated, i.e., $y^{j-1} \not \in \bigcup_{k \le j-1} I^*_{k}$}
				\STATE $I^*_j \leftarrow I^*_j \cup \{y^{j-1}\}$
			\ENDIF
			\IF{$y^{j-1} \neq y^j$}
			\IF{$(a)$ $y^j$ fully appears in the $j$-th bundle of some partial division, i.e., $y^j \in I_j(\bfx_k)$ for some $k \in [n]$; or $(b)$ $y^{j-1}$ is already allocated to some other agent, i.e., $y^{j-1} \in \bigcup_{k \le j-1} I^*_{k}$, and there is no $k \in [n]$ with $x^{j-1}_{k}=y^{j-1}+\frac 12$ and $x^{j}_{k}= y^{j}-\frac 12$\label{line:boundary:condition}
			%(i.e., $\ell_{j}(\bfx_k)=y^{j-1}+1$ and $r_{j}(\bfx_k)=y^{j}-1$)
			}  
			\label{line:if}
				\STATE $I^*_j \leftarrow I^*_j \cup \{y^{j}\}$\label{line:boundary2}
			\ENDIF
			\ENDIF
		\ENDFOR
		\IF{$y^{n-1}$ is unallocated, i.e., $y^{n-1} \not \in \bigcup_{k \le n-1} I^*_{k}$}
			\STATE $I^*_{n} \leftarrow I^*_n \cup \{y^{n-1}\}$
		\ENDIF	
		\RETURN $(I^*_1,I^*_2,\ldots,I^*_n)$
	\end{algorithmic}
\end{algorithm}

In Lemma~\ref{lem:ef1:approx-correct}, we show that the estimate of each agent $i$ on the $j$-th bundle of the output is neither overly optimistic nor overly pessimistic when applying Algorithm \ref{alg:rounding} to any elementary simplex $S$ of $\ttT_{\half}$. The case distinction in the proof of Lemma~\ref{lem:ef1:approx-correct} considers whether the bundle is an interior or exterior one, whether the bundle receives two boundary items, exactly one boundary item, or none of them, and whether each of the knives that induce the bundle is located left to the boundary item, at the boundary item, or right to the boundary item.

\begin{lemma}
	\label{lem:ef1:approx-correct}
	Consider the triangulation $\ttT_{\half}$ of $S_m$. Let $S=\langle \bfx_1,\bfx_2,\ldots,\bfx_n \rangle$ be any elementary simplex of $\ttT_{\half}$ and let $(I^*_1,I^*_2,\ldots,I^*_n)$ be a division returned by Algorithm \ref{alg:rounding}. Then, for each $i, j, k\in [n]$, we have
	$v_i(I^*_j) \ge \hat v_i(\bfx_k, j) \ge v_i^-(I^*_j)$.
\end{lemma}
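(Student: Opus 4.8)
The plan is to fix $i,j,k$ and prove the two inequalities $v_i(I^*_j)\ge \hat v_i(\bfx_k,j)$ and $\hat v_i(\bfx_k,j)\ge v_i^-(I^*_j)$ by an exhaustive case analysis driven by the structure of Algorithm~\ref{alg:rounding}. The crucial structural facts I would establish first, as preliminary observations, are: (i) for every $k$, $B_j\subseteq I_j(\bfx_k)$, and $I_j(\bfx_k)\subseteq B_j\cup\{y^{j-1},y^j\}$ — that is, each partial bundle $I_j(\bfx_k)$ is sandwiched between $B_j$ and $B_j$ together with its two adjacent boundary items; (ii) the knives involved in $S$ all move in half-steps and in the same direction, with each knife moving exactly once (the balanced property~\eqref{eq:sperner:kuhn-triangulation}), so as $k$ ranges over $[n]$ the set $I_j(\bfx_k)$ is monotone in a precise sense along the sequence; and (iii) a characterization of $I^*_j$ coming from the rounding rule: $B_j\subseteq I^*_j\subseteq B_j\cup\{y^{j-1},y^j\}$, and whether $y^{j-1}$ (resp. $y^j$) lies in $I^*_j$ is determined by conditions (a)/(b) of Line~\ref{line:boundary:condition} and the left-to-right allocation order. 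All subsequent reasoning is monotonicity of $v_i$ applied to set containments between $I^*_j$, $I_j(\bfx_k)$, and the various ``virtual'' bundles appearing in the definition of $\hat v_i(\bfx_k,j)$.

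Next I would split into the exterior cases $j=1$ and $j=n$ and the interior case $2\le j\le n-1$. For $j=1$: $\hat v_i(\bfx_k,1)=v_i(I_1(\bfx_k))$; one checks $B_1\subseteq I^*_1\subseteq B_1\cup\{y^1\}$ while $B_1\subseteq I_1(\bfx_k)\subseteq B_1\cup\{y^1\}$, and that $y^1\in I^*_1$ whenever $y^1\in I_1(\bfx_k)$ for \emph{some} $k$ (the trigger in Line~\ref{line:boundary1}); this gives $I_1(\bfx_k)\subseteq I^*_1$, hence the left inequality, and $I^*_1\subseteq I_1(\bfx_k)\cup\{y^1\}$ with $y^1$ an outer good of $I_1(\bfx_k)$, which together with monotonicity yields $\hat v_i(\bfx_k,1)=v_i(I_1(\bfx_k))\ge v_i^-(I^*_1)$ — here I need the up-to-one-good bound, so I should be careful that $I^*_1\setminus\{y^1\}$ is connected (it is, since $y^1$ is the right-most item of $I^*_1$). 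The case $j=n$ is the mirror image using $\hat v_i(\bfx_k,n)=v_i(I_n(\bfx_k)\setminus\{\ell_n(\bfx_k)\})$.

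The interior case is the heart of the argument and the main obstacle. Here $\hat v_i(\bfx_k,j)$ is given by a four-way split according to whether $\ell_j(\bfx_k)$ and $r_j(\bfx_k)$ are fully visible in $I_j(\bfx_k)$, and two of the branches involve the up-to-one operator $v_i^-(\cdot\cup\{\ldots\})$. I would organize the analysis by a grid: (rows) which of $y^{j-1},y^j$ ends up in $I^*_j$ — both, only $y^{j-1}$, only $y^j$, or neither — determined by tracing conditions (a)/(b); (columns) the position of each knife $x^{j-1}_k, x^{j}_k$ relative to $y^{j-1}+\tfrac12$ and $y^j-\tfrac12$, which fixes the branch of $\hat v_i(\bfx_k,j)$. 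In each cell the target is a containment of the form $(\text{virtual bundle for }\bfx_k)\subseteq I^*_j$ or $I^*_j\setminus\{g\}\subseteq(\text{virtual bundle for }\bfx_k)$ for a suitable outer good $g$. The delicate points are: the asymmetry in the virtual valuation (agent is \emph{pessimistic}, via $v_i^-$, about the left-most boundary and \emph{optimistic}, via plain $v_i$, about the right-most) must exactly match the asymmetry of the rounding (left bundle has priority on $y^j$); and condition (b) — ``$I_j(\bfx_k)$ never equals $B_j$'' translated via the knife positions to ``no $k$ with $x^{j-1}_k=y^{j-1}+\tfrac12$ and $x^j_k=y^j-\tfrac12$'' — is precisely what rules out the bad scenario in which $I^*_j$ would be forced to take an extra item making it strictly larger than every $\hat v_i(\bfx_k,j)$ can account for. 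I expect the bookkeeping of which boundary items are ``already allocated'' when the loop reaches index $j$ (so that $y^{j-1}\in I^*_j$ iff $y^{j-1}$ was not claimed by $I^*_{j-1}$) to be the most error-prone part, and I would handle it by proving a clean invariant: after the iteration for index $j$, exactly one of $I^*_j, I^*_{j+1}$ contains $y^j$, and $I^*_j$ fails to receive any adjacent boundary item only if some $I_j(\bfx_k)=B_j$.
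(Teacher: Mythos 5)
Your plan is essentially the paper's proof: establish $B_j\subseteq I_j(\bfx_k)\subseteq B_j\cup\{y^{j-1},y^j\}$ and $B_j\subseteq I^*_j\subseteq B_j\cup\{y^{j-1},y^j\}$, then do a case analysis indexed by which boundary items $I^*_j$ receives (the paper's cases 3--6 are exactly your four ``rows'') crossed with the possible positions of $x^{j-1}_k,x^j_k$ (your ``columns''), using monotonicity plus the definition of $\hat v_i$, with condition $(b)$ of Line~\ref{line:if} supplying the vertex $\bfx_{k'}$ with $I_j(\bfx_{k'})=B_j$ in the ``neither boundary item'' case. Your treatment of $j=1$ is also the paper's.

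The one step that would fail as written is $j=n$, which you dispatch as ``the mirror image'' of $j=1$. It is not: for $j=1$ the virtual value never counts a hidden boundary item, so your containments suffice, but for $j=n$ the virtual value always discards $\ell_n(\bfx_k)$, and when some vertex has $x^{n-1}_k=y^{n-1}+\tfrac12$ the discarded item is $y^{n-1}+1\in B_n$, so the mirrored containment $I^*_n\setminus\{y^{n-1}\}\subseteq I_n(\bfx_k)\setminus\{\ell_n(\bfx_k)\}$ is false whenever $y^{n-1}\in I^*_n$. The paper's case 2(b) closes this by an algorithm-history argument: if $x^{n-1}_{k}=y^{n-1}+\tfrac12$ for some $k$, then $y^{n-1}$ is already allocated to an earlier bundle, hence $I^*_n=B_n$ and $v_i^-(I^*_n)\le v_i(B_n\setminus\{y^{n-1}+1\})=\hat v_i(\bfx_k,n)$; i.e., the last bundle needs the same ``rule out the bad configuration'' reasoning you reserve for interior bundles. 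Two smaller points: the paper isolates the degenerate case $y^{j-1}=y^j$ (so $B_j=\emptyset$), which your grid does not mention and which is precisely where left-pessimism matters ($I_j(\bfx_k)$ can be the fully visible singleton $\{y^j\}$ while $I^*_j=\emptyset$, so one must check $\hat v_i(\bfx_k,j)=0$); and your proposed invariant ``after iteration $j$, exactly one of $I^*_j,I^*_{j+1}$ contains $y^j$'' is misstated, since at that point $I^*_{j+1}$ has not yet been offered $y^j$, and when consecutive boundary items coincide the item may already sit in an earlier bundle. With the $j=n$ case argued as in the paper and the degenerate case added, your outline matches the paper's proof.
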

\begin{proof}
Note that $y^{j-1} \le y^j$ for each $j=2,\ldots,n-1$; otherwise, $y^{j-1}=x^{j-1}_{k'} > x^j_{k''}=y^j$ for some $k',k'' \in [n]$, which means that $x^{j-1}_{k'} \ge x^j_{k''}+ 1 > x^j_{k'}$ (the last inequality holds since each knife moves only in half-step, i.e., $|x^j_{k'} - x^j_{k''}| \le \frac12$), a contradiction. Now, consider each bundle separately. For each case except for the last one, we provide a figure to illustrate the possible knife positions. 
	\begin{enumerate}
%Case1%%%%%%
		\item Suppose $j = 1$. We have the following cases. 
		\begin{enumerate}
			\item $y^1 \not \in I_1(\bfx_k)$ so that $y^1$ does not fully appear in the first bundle of division $\calI(\bfx_k)$. Then $\hat v_i(\bfx_k, 1)=v_i(I_1(\bfx_k))=v_i(B_1)$. Thus, $v_i(I^*_1) \ge \hat v_i(\bfx_k, 1) = v_i(B_1) \ge v_i^-(I^*_1)$, since $I^*_1$ is either $B_1$ or $B_1 \cup \{y^1\}$.
			\item $y^1 \in I_1(\bfx_k)$ so that $y^1$ fully appears in the first bundle of division $\calI(\bfx_k)$. Then $I^*_1 = B_1 \cup \{ y^1 \}$ and $\hat v_i(\bfx_k, 1)=v_i(I_1(\bfx_k))=v_i(B_1 \cup \{y^1\})$. Thus, $v_i(I^*_1) = \hat v_i(\bfx_k, 1) \ge v_i^-(I^*_1)$.
		\end{enumerate}
\def\stromquistscale{0.81}
\begin{center}
	\scalebox{\stromquistscale}{
		\begin{tikzpicture}[scale=0.6, transform shape, every node/.style={minimum size=12mm, inner sep=1pt}]

        \draw[ultra thick, gray] (5,1.2) node[above] {\Large $x^1_k$} -- (5,-0.2);
        \draw[fill=gray!30] (0,0) rectangle (4,1);
        \node[draw,circle](y1) at (5,0.5) {\Large $y^1$};
        \draw[-, >=latex,thick] (4,0.5)--(y1); 
        \node (label) at (2,-1) {\Large 1(a)};
	    \node at (2,0.5) {\Large $B_1$};

		\begin{scope}[shift={(8,0)}]
        \draw[ultra thick, gray] (5.7,1.2) node[above] {\Large $x^1_k$} -- (5.7,-0.2);
		\draw[fill=gray!30] (0,0) rectangle (4,1);
        \node[draw,circle](y1) at (5,0.5) {\Large $y^1$};
        \draw[-, >=latex,thick] (4,0.5)--(y1); 
		\node at (2,0.5) {\Large $B_1$};
        \node (label) at (2,-1) {\Large 1(b)};
		%\draw [decorate,decoration={brace,mirror,amplitude=10pt},xshift=-4pt,yshift=0pt](-4,-0.7) -- (4.2,-0.7) node [black,midway,yshift=-0.8cm] {\large $\shoutleft$};
		
		%\draw [decorate,decoration={brace,mirror,amplitude=10pt},xshift=-4pt,yshift=0pt](4.4,-0.7) -- (10.4,-0.7) node [black,midway,yshift=-0.8cm] {\large $\shouter$ or $\chooser$};
		%\draw [decorate,decoration={brace,mirror,amplitude=10pt},xshift=-4pt,yshift=0pt](10.6,-0.7) -- (14.8,-0.7) node [black,midway,yshift=-0.8cm]{\large $\shouter$ or $\chooser$};
		
		\end{scope}
		
		\end{tikzpicture}}
		\vspace{-3pt}
\end{center}

%Case 2%%%%%%%%%%
		\item Suppose $j = n$. Recall that $x^{n}_k= m + \frac{1}{2}$ so that $r_n(\bfx_k)= \ceil{x^{n}_k-\frac{1}{2}} =m$. If $I_n(\bfx_k) = \emptyset$, then $v_i(\bfx_k, n)=0$; thus, $v_i(I^*_n) \ge \hat v_i(\bfx_k, n) \ge v_i^-(I^*_n)= v_i(\emptyset)=0$, since $I^*_n$ is either $\emptyset$ or $\{m\}$. Suppose that $I_n(\bfx_k) \neq \emptyset$. Then $r_n(\bfx_k) \in I_n(\bfx_k)$ and hence $\hat v_i(\bfx_k, n)= v_i(I_n(\bfx_k) \setminus \{\ell_n(\bfx_k) \})$ or $\hat v_i(\bfx_k, n)= v_i(I_n(\bfx_k))$. Consider the following cases. 
		\begin{enumerate}
			\item $I_n(\bfx_k) \neq \emptyset$ and $x_k^{n-1} \le y^{n-1}$ so that $\hat v_i(\bfx_k, n)= v_i(B_n)$. Thus, $v_i(I^*_n) \ge \hat v_i(\bfx_k, n) = v_i(B_n) \ge v_i^-(I^*_n)$, since $I^*_n$ is either $B_n$ or $\{y^{n-1}\} \cup B_n  $.
			\item $I_n(\bfx_k) \neq \emptyset$ and $x_k^{n-1} =y^{n-1} + \frac12$ so that $\hat v_i(\bfx_k, n)= v_i(B_n \setminus \{y^{n-1}+1\})$. Thus, $y^{n-1}$ fully appears in the $n-1$-th bundle $I_{n-1}(\bfx_{k'})$ for some $\bfx_{k'}$ and $I^*_n = B_n$; so $v_i(I^*_n) \ge \hat v_i(\bfx_k, n) \ge v_i^-(I^*_n)$.
		\end{enumerate}
\begin{center}
	\scalebox{\stromquistscale}{
		\begin{tikzpicture}[scale=0.6, transform shape, every node/.style={minimum size=12mm, inner sep=1pt}]
		
		\begin{scope}[shift={(8,0)}]
        \draw[ultra thick, gray] (-0.2,1.2) node[above] {\Large $x^{n-1}_k$} -- (-0.2,-0.2);
        \draw[fill=gray!30] (0,0) rectangle (4,1);
        \node[draw,circle](y1) at (-1,0.5) {\Large $y^{n-1}$};
        \draw[-, >=latex,thick] (0,0.5)--(y1); 
        \node (label) at (2,-1) {\Large 2(a)};
	    \node at (2,0.5) {\Large $B_n$};
		\end{scope}

        \begin{scope}[shift={(0,0)}]
        \draw[ultra thick, gray] (-1,1.2) node[above] {\Large $x^{n-1}_k \le y^n-1$} -- (-1,-0.2);
        
		\draw[fill=gray!30] (0,0) rectangle (4,1);
        \node[draw,circle](y1) at (-1,0.5) {\Large $y^{n-1}$};
        \draw[-, >=latex,thick] (0,0.5)--(y1); 
	    \node at (2,0.5) {\Large $B_n$};
        \node (label) at (2,-1) {\Large 2(b)};		
		\end{scope}
		
		\end{tikzpicture}}
		\vspace{-3pt}
\end{center}

%Case3%%%%%%%
		\item Suppose $j \in \{2,3,\ldots,n-1\}$, $y^{j-1} < y^j$, and $I^*_j =B_j$. By the {\bf if}-condition in Line \ref{line:if}, this means that $y^{j-1}$ has been allocated to the $j-1$st bundle $I^*_{j-1}$ and there is no partial division represented by the vertices of $S$ such that $y^j$ fully appears in the $j$-th bundle. Hence, by $(b)$ of Line \ref{line:if}, $x^{j-1}_{k'}=y^{j-1}+\frac 12$ and $x^{j}_{k'}= y^{j}-\frac 12$ 
		for some vertex $\bfx_{k'}$, meaning that $y^{j-1} \le x_k^{j-1}$ and $x_k^{j} \le  y^{j}$ since each knife moves in half-step only. Further, the case when $x_k^{j-1} = y^{j-1}$ and $x_k^{j} = y^{j}$ is not possible: indeed, if so, we have $\phi^{-1}(k)<\phi^{-1}(k')$ by the fact that $x_k^{j-1} = y^{j-1} <y^{j-1} +\frac 12= x^{j-1}_{k'}$ and by \eqref{eq:sperner:kuhn-triangulation}, which again implies $y^j=x_k^{j} \le x^j_{k'}= y^{j} -\frac12$ by \eqref{eq:sperner:kuhn-triangulation}, a contradiction. We thus have either
		\begin{enumerate}
  			\item $x_k^{j-1} = y^{j-1}$ and $x_k^{j} = y^{j}-\frac12$ so that $\hat v_i(\bfx_k, j)=v_i(I_j(\bfx_k))=v_i(B_j)$.
			\item $x_k^{j-1} = y^{j-1} + \frac12$ and $x_k^{j} = y^{j}-\frac12$ so that $\hat v_i(\bfx_k, j)=v_i(I_j(\bfx_k)\setminus \{\ell_{j}(\bfx_k)
			\})=v_i(B_j\setminus \{y^{j-1}+1\})$.
			\item $x_k^{j-1} = y^{j-1} + \frac12$ and $x_k^{j} = y^{j}$ so that $\hat v_i(\bfx_k, j)=v^-_i(B_j\cup \{y^j\})$.

		\end{enumerate}
		In either case, $v_i(I^*_j)=v_i(B_j) \ge \hat v_i(\bfx_k, j) \ge v_i^-(B_j)=v_i^-(I^*_j)$.
\begin{center}
	\scalebox{\stromquistscale}{
		\begin{tikzpicture}[scale=0.6, transform shape, every node/.style={minimum size=12mm, inner sep=1pt}]

        \begin{scope}[shift={(8,0)}]
        \draw[ultra thick, gray] (-0.2,1.2) node[above] {\Large $x^{j-1}$} -- (-0.2,-0.2);
        \draw[ultra thick, gray] (4.2,1.2) node[above] {\Large $x^j_k$} -- (4.2,-0.2);
        
        \draw[fill=gray!30] (0,0) rectangle (4,1);
        \node[draw,circle](y1) at (-1,0.5) {\Large $y^{j-1}$};
        \node[draw,circle](y2) at (5,0.5) {\Large $y^j$};
        \draw[-, >=latex,thick] (0,0.5)--(y1); 
        \draw[-, >=latex,thick] (4,0.5)--(y2); 
        \node (label) at (2,-1) {\Large 3(b)};
	    \node at (2,0.5) {\Large $B_j$};
        \end{scope}
		
		\begin{scope}[shift={(16,0)}]
        \draw[ultra thick, gray] (-0.2,1.2) node[above] {\Large $x^{j-1}_k$} -- (-0.2,-0.2);
        \draw[ultra thick, gray] (5,1.2) node[above] {\Large $x^j_k$} -- (5,-0.2);
        \draw[fill=gray!30] (0,0) rectangle (4,1);
        \node[draw,circle](y1) at (-1,0.5) {\Large $y^{j-1}$};
        \node[draw,circle](y2) at (5,0.5) {\Large $y^j$};
        \draw[-, >=latex,thick] (0,0.5)--(y1); 
        \draw[-, >=latex,thick] (4,0.5)--(y2); 
        \node (label) at (2,-1) {\Large 3(c)};
	    \node at (2,0.5) {\Large $B_j$};
		\end{scope}

        \begin{scope}[shift={(0,0)}]
        \draw[ultra thick, gray] (-1,1.2) node[above] {\Large $x^{j-1}_k$} -- (-1,-0.2);
        \draw[ultra thick, gray] (4.2,1.2) node[above] {\Large $x^j_k$} -- (4.2,-0.2);
        
		\draw[fill=gray!30] (0,0) rectangle (4,1);
        \node[draw,circle](y1) at (-1,0.5) {\Large $y^{j-1}$};
        \node[draw,circle](y2) at (5,0.5) {\Large $y^j$};
        \draw[-, >=latex,thick] (0,0.5)--(y1); 
        \draw[-, >=latex,thick] (4,0.5)--(y2); 
	    \node at (2,0.5) {\Large $B_j$};
        \node (label) at (2,-1) {\Large 3(a)};		
		\end{scope}
		
		\end{tikzpicture}}
		\vspace{-3pt}
\end{center}		
%Case4%%%%%%%%			
		\item Suppose $j \in \{2,3,\ldots,n-1\}$, $y^{j-1} < y^j$, and $I^*_j =\{ y^{j-1} \}  \cup B_j$. Since $y^{j-1}$ has not been allocated to $I^*_{j-1}$, there is no partial division represented by the vertices of $S$ such that $y^{j-1}$ fully appears in the $j-1$st bundle by the {\bf if}-condition in Line \ref{line:if}, which means that $y^{j-1}$ fully appears in the $j$-th bundle $I_j(\bfx_{k'})$ for some ${k'} \in [n]$. On the other hand, since $y^j$ has not been allocated to $I^*_{j}$, there is no partial division represented by the vertices of $S$ such that $y^{j}$ fully appears in the $j$-th bundle. Thus, we have $x_k^{j-1} \le y^{j-1}$ and $x_k^j \le y^j$. Consider the following cases: 
		\begin{enumerate}
            \item $x_k^{j-1} = y^{j-1}-\frac12$ and $x_k^j =y^j -\frac12$ so that $\hat v_i(\bfx_k,j)= v_i(I_j(\bfx_k)\setminus \{\ell_{j}(\bfx_k)\})=v_i(B_j)$. 
            \item $x_k^{j-1} = y^{j-1}-\frac12$ and $x_k^j =y^j $ so that $\hat v_i(\bfx_k,j)=v_i^-(I_j(\bfx_k)\cup\{r_{j}(\bfx_k)\})=v^-_i(\{y^{j-1}\} \cup B_j \cup \{y^j\})$. 
            \item $x_k^{j-1} = y^{j-1}$ and $x_k^j =y^j -\frac12$ so that $\hat v_i(\bfx_k,j)=v_i(I_j(\bfx_k))=v_i(B_j)$. 
			\item $x_k^{j-1} = y^{j-1}$ and $x_k^j =y^j$ so that $\hat v_i(\bfx_k,j)=v^-_i(\{y^{j-1}\} \cup B_j \cup \{y^j\})$.
		\end{enumerate}
		In either case, $v_i(I^*_j)=v_i(\{ y^{j-1} \}  \cup  B_j) \ge \hat v_i(\bfx_k, j) \ge v_i^-(\{y^{j-1}\} \cup  B_j )=v_i^-(I^*_j)$.

\begin{center}
	\scalebox{\stromquistscale}{
		\begin{tikzpicture}[scale=0.6, transform shape, every node/.style={minimum size=12mm, inner sep=1pt}]

		\begin{scope}[shift={(-2,0)}]
        \draw[ultra thick, gray] (-1.8,1.2) node[above] {\Large $x^{j-1}$} -- (-1.8,-0.2);
        \draw[ultra thick, gray] (4.2,1.2) node[above] {\Large $x^j$} -- (4.2,-0.2);
        \draw[fill=gray!30] (0,0) rectangle (4,1);
        \node[draw,circle](y1) at (-1,0.5) {\Large $y^{j-1}$};
        \node[draw,circle](y2) at (5,0.5) {\Large $y^j$};
        \draw[-, >=latex,thick] (0,0.5)--(y1); 
        \draw[-, >=latex,thick] (4,0.5)--(y2); 
        \node (label) at (2,-1) {\Large 4(a)};
	    \node at (2,0.5) {\Large $B_j$};
		\end{scope}

        \begin{scope}[shift={(6,0)}]
        \draw[ultra thick, gray] (-1.8,1.2) node[above] {\Large $x^{j-1}$} -- (-1.8,-0.2);
        \draw[ultra thick, gray] (5,1.2) node[above] {\Large $x^j$} -- (5,-0.2);
        
		\draw[fill=gray!30] (0,0) rectangle (4,1);
        \node[draw,circle](y1) at (-1,0.5) {\Large $y^{j-1}$};
        \node[draw,circle](y2) at (5,0.5) {\Large $y^j$};
        \draw[-, >=latex,thick] (0,0.5)--(y1); 
        \draw[-, >=latex,thick] (4,0.5)--(y2); 
	    \node at (2,0.5) {\Large $B_j$};
        \node (label) at (2,-1) {\Large 4(b)};		
		\end{scope}

        \begin{scope}[shift={(14,0)}]
        \draw[ultra thick, gray] (-1,1.2) node[above] {\Large $x^{j-1}$} -- (-1,-0.2);
        \draw[ultra thick, gray] (4.2,1.2) node[above] {\Large $x^j$} -- (4.2,-0.2);
        
		\draw[fill=gray!30] (0,0) rectangle (4,1);
        \node[draw,circle](y1) at (-1,0.5) {\Large $y^{j-1}$};
        \node[draw,circle](y2) at (5,0.5) {\Large $y^j$};
        \draw[-, >=latex,thick] (0,0.5)--(y1); 
        \draw[-, >=latex,thick] (4,0.5)--(y2); 
	    \node at (2,0.5) {\Large $B_j$};
        \node (label) at (2,-1) {\Large 4(c)};		
		\end{scope}

          \begin{scope}[shift={(22,0)}]
        \draw[ultra thick, gray] (-1,1.2) node[above] {\Large $x^{j-1}$} -- (-1,-0.2);
        \draw[ultra thick, gray] (5,1.2) node[above] {\Large $x^j$} -- (5,-0.2);
        
        \draw[fill=gray!30] (0,0) rectangle (4,1);
        \node[draw,circle](y1) at (-1,0.5) {\Large $y^{j-1}$};
        \node[draw,circle](y2) at (5,0.5) {\Large $y^j$};
        \draw[-, >=latex,thick] (0,0.5)--(y1); 
        \draw[-, >=latex,thick] (4,0.5)--(y2); 
        \node (label) at (2,-1) {\Large 4(d)};
	    \node at (2,0.5) {\Large $B_j$};
        \end{scope}
		\end{tikzpicture}}
		\vspace{-3pt}
\end{center}

%Case5%%%%%%%%%%
		\item Suppose $j \in \{2,3,\ldots,n-1\}$, $y^{j-1}< y^j$, and $I^*_j = B_j \cup \{ y^{j} \}$. Since $y^j$ is allocated to the $j$-th bundle but $y^{j-1}$ is allocated to some other bundle under $\calI^*$, there is no $k'$ with $x^{j-1}_{k'}=y^{j-1}+\frac12$ and $x^j_{k'} =y^{j}-\frac12$ by the {\bf if}-condition $($b$)$ in Line \ref{line:if}. Then we have the following cases:
		\begin{enumerate}
			\item $x_k^{j-1} \le y^{j-1}$ and $x_k^j =y^j -\frac12$ so that $\hat v_i(\bfx_k,j)=v_i(B_j)$. 
			\item $x_k^{j-1} \le y^{j-1}$ and $x_k^j =y^j$ so that $\hat v_i(\bfx_k,j)=v^-_i(\{y^{j-1}\} \cup B_j \cup \{y^j\})$.
			\item $x_k^{j-1} \le y^{j-1}$ and $x_k^j =y^j+ \frac12$ so that $\hat v_i(\bfx_k,j)=v_i(B_j \cup \{y^j\})$.
			\item $x_k^{j-1} = y^{j-1} + \frac12$ and $x_k^j =y^j$ so that $\hat v_i(\bfx_k,j)=v^-_i(B_j \cup \{y^j\})$.
			\item $x_k^{j-1} = y^{j-1} + \frac12$ and $x_k^j =y^j + \frac12$ so that $\hat v_i(\bfx_k,j)=v_i(I_j(\bfx_k)\setminus \{\ell_{j}(\bfx_k)\})=v_i((B_j \setminus \{y^{j-1}+1\}) \cup \{y^j\})$
		\end{enumerate}
		In either case, $v_i(I^*_j)=v_i(B_j \cup \{y^j\}) \ge \hat v_i(\bfx_k, j) \ge v_i^-(B_j \cup \{y^j\})=v_i^-(I^*_j)$.

\begin{center}
	\scalebox{\stromquistscale}{
		\begin{tikzpicture}[scale=0.6, transform shape, every node/.style={minimum size=12mm, inner sep=1pt}]

         \draw[ultra thick, gray] (-1,1.2) node[above] {\Large $x^{j-1} \le y^{j-1}$} -- (-1,-0.2);
        \draw[ultra thick, gray] (4.2,1.2) node[above] {\Large $x^j$} -- (4.2,-0.2);
        
        \draw[fill=gray!30] (0,0) rectangle (4,1);
        \node[draw,circle](y1) at (-1,0.5) {\Large $y^{j-1}$};
        \node[draw,circle](y2) at (5,0.5) {\Large $y^j$};
        \draw[-, >=latex,thick] (0,0.5)--(y1); 
        \draw[-, >=latex,thick] (4,0.5)--(y2); 
        \node (label) at (2,-1) {\Large 5(a)};
	    \node at (2,0.5) {\Large $B_j$};

		\begin{scope}[shift={(10,0)}]
        \draw[ultra thick, gray] (-1,1.2) node[above] {\Large $x^{j-1} \le y^{j-1}$} -- (-1,-0.2);
        \draw[ultra thick, gray] (5,1.2) node[above] {\Large $x^j$} -- (5,-0.2);
        \draw[fill=gray!30] (0,0) rectangle (4,1);
        \node[draw,circle](y1) at (-1,0.5) {\Large $y^{j-1}$};
        \node[draw,circle](y2) at (5,0.5) {\Large $y^j$};
        \draw[-, >=latex,thick] (0,0.5)--(y1); 
        \draw[-, >=latex,thick] (4,0.5)--(y2); 
        \node (label) at (2,-1) {\Large 5(b)};
	    \node at (2,0.5) {\Large $B_j$};
		\end{scope}

        \begin{scope}[shift={(20,0)}]
        \draw[ultra thick, gray] (-1,1.2) node[above] {\Large $x^{j-1} \le y^{j-1}$} -- (-1,-0.2);
        \draw[ultra thick, gray] (5.8,1.2) node[above] {\Large $x^j$} -- (5.8,-0.2);
        
		\draw[fill=gray!30] (0,0) rectangle (4,1);
        \node[draw,circle](y1) at (-1,0.5) {\Large $y^{j-1}$};
        \node[draw,circle](y2) at (5,0.5) {\Large $y^j$};
        \draw[-, >=latex,thick] (0,0.5)--(y1); 
        \draw[-, >=latex,thick] (4,0.5)--(y2); 
	    \node at (2,0.5) {\Large $B_j$};
        \node (label) at (2,-1) {\Large 5(c)};		
		\end{scope}

        \begin{scope}[shift={(6,-4)}]
        \draw[ultra thick, gray] (-0.2,1.2) node[above] {\Large $x^{j-1}$} -- (-0.2,-0.2);
        \draw[ultra thick, gray] (5,1.2) node[above] {\Large $x^j$} -- (5,-0.2);
        
		\draw[fill=gray!30] (0,0) rectangle (4,1);
        \node[draw,circle](y1) at (-1,0.5) {\Large $y^{j-1}$};
        \node[draw,circle](y2) at (5,0.5) {\Large $y^j$};
        \draw[-, >=latex,thick] (0,0.5)--(y1); 
        \draw[-, >=latex,thick] (4,0.5)--(y2); 
	    \node at (2,0.5) {\Large $B_j$};
        \node (label) at (2,-1) {\Large 5(d)};		
		\end{scope}

        \begin{scope}[shift={(16,-4)}]
        \draw[ultra thick, gray] (-0.2,1.2) node[above] {\Large $x^{j-1}$} -- (-0.2,-0.2);
        \draw[ultra thick, gray] (5.8,1.2) node[above] {\Large $x^j$} -- (5.8,-0.2);
        
		\draw[fill=gray!30] (0,0) rectangle (4,1);
        \node[draw,circle](y1) at (-1,0.5) {\Large $y^{j-1}$};
        \node[draw,circle](y2) at (5,0.5) {\Large $y^j$};
        \draw[-, >=latex,thick] (0,0.5)--(y1); 
        \draw[-, >=latex,thick] (4,0.5)--(y2); 
	    \node at (2,0.5) {\Large $B_j$};
        \node (label) at (2,-1) {\Large 5(e)};		
		\end{scope}
		\end{tikzpicture}}
		\vspace{-3pt}
\end{center}
%Case 6%%%%%%%%%%%%%%
		\item Suppose $j \in \{2,3,\ldots,n-1\}$, $y^{j-1} < y^j$, and $I^*_j = \{ y^{j-1} \} \cup B_j \cup \{ y^j \}$. Then, there exist some ${k'},k'' \in [n]$ such that $y^{j-1}$ and $y^j$ fully appear in the $j$-th bundles $I_j(\bfx_{k'})$ and $I_j(\bfx_{k''})$, respectively. We thus have $x_k^{j-1} \le y^{j-1}$ and $y^j \le x_k^j$. Consider the following cases: 
		\begin{enumerate}

  			 \item $x_k^{j-1} = y^{j-1} - \frac12$ and $x_k^j =y^j$ so that $\hat v_i(\bfx_k, j) =v_i^-(I_j(\bfx_k)\cup\{r_{j}(\bfx_k)\})  =  v_i^-(\{y^{j-1}\} \cup B_j \cup \{y^j\})$. 
            \item $x_k^{j-1} = y^{j-1} - \frac12$ and $x_k^j=y^j+\frac12$ so that $\hat v_i(\bfx_k, j) =v_i(I_j(\bfx_k) \setminus \{\ell_{j}(\bfx_k) \})= v_i(B_j \cup \{y^j\})$. 
            \item $x_k^{j-1} = y^{j-1}$ and $x_k^j =y^j$ so that $\hat v_i(\bfx_k, j) = v_i^-(I_j(\bfx_k)\cup\{\ell_{j}(\bfx_k),r_{j}(\bfx_k)\})= v_i^-(\{y^{j-1}\} \cup B_j \cup \{y^j\})$. 
            \item $x_k^{j-1} = y^{j-1} $ and $x_k^j=y^j+\frac12$ so that $\hat v_i(\bfx_k, j) = v_i(I_j(\bfx_k))= v_i(B_j \cup \{y^j\})$.

		\end{enumerate}
		In either case, $v_i(I^*_j) \ge \hat v_i(\bfx_k, j) \ge v_i^-(I^*_j)$ since $I^*_j = \{y^{j-1}\} \cup B_j \cup \{y^j\}$.

\begin{center}
	\scalebox{\stromquistscale}{
		\begin{tikzpicture}[scale=0.6, transform shape, every node/.style={minimum size=12mm, inner sep=1pt}]
        \begin{scope}[shift={(-2,0)}]
        \draw[ultra thick, gray] (-1.8,1.2) node[above] {\Large $x^{j-1}$} -- (-1.8,-0.2);
        \draw[ultra thick, gray] (5,1.2) node[above] {\Large $x^j$} -- (5,-0.2);
        
		\draw[fill=gray!30] (0,0) rectangle (4,1);
        \node[draw,circle](y1) at (-1,0.5) {\Large $y^{j-1}$};
        \node[draw,circle](y2) at (5,0.5) {\Large $y^j$};
        \draw[-, >=latex,thick] (0,0.5)--(y1); 
        \draw[-, >=latex,thick] (4,0.5)--(y2); 
	    \node at (2,0.5) {\Large $B_j$};
        \node (label) at (2,-1) {\Large 6(a)};		
		\end{scope}

 \begin{scope}[shift={(14,0)}]
        \draw[ultra thick, gray] (-1,1.2) node[above] {\Large $x^{j-1}$} -- (-1,-0.2);
        \draw[ultra thick, gray] (5,1.2) node[above] {\Large $x^j$} -- (5,-0.2);
        
        \draw[fill=gray!30] (0,0) rectangle (4,1);
        \node[draw,circle](y1) at (-1,0.5) {\Large $y^{j-1}$};
        \node[draw,circle](y2) at (5,0.5) {\Large $y^j$};
        \draw[-, >=latex,thick] (0,0.5)--(y1); 
        \draw[-, >=latex,thick] (4,0.5)--(y2); 
        \node (label) at (2,-1) {\Large 6(c)};
	    \node at (2,0.5) {\Large $B_j$};
        \end{scope}

      	\begin{scope}[shift={(6,0)}]
        \draw[ultra thick, gray] (-1.8,1.2) node[above] {\Large $x^{j-1}$} -- (-1.8,-0.2);
        \draw[ultra thick, gray] (5.8,1.2) node[above] {\Large $x^j$} -- (5.8,-0.2);
        \draw[fill=gray!30] (0,0) rectangle (4,1);
        \node[draw,circle](y1) at (-1,0.5) {\Large $y^{j-1}$};
        \node[draw,circle](y2) at (5,0.5) {\Large $y^j$};
        \draw[-, >=latex,thick] (0,0.5)--(y1); 
        \draw[-, >=latex,thick] (4,0.5)--(y2); 
        \node (label) at (2,-1) {\Large 6(b)};
	    \node at (2,0.5) {\Large $B_j$};
		\end{scope}

        \begin{scope}[shift={(22,0)}]
        \draw[ultra thick, gray] (-1,1.2) node[above] {\Large $x^{j-1}$} -- (-1,-0.2);
        \draw[ultra thick, gray] (5.8,1.2) node[above] {\Large $x^j$} -- (5.8,-0.2);
        
		\draw[fill=gray!30] (0,0) rectangle (4,1);
        \node[draw,circle](y1) at (-1,0.5) {\Large $y^{j-1}$};
        \node[draw,circle](y2) at (5,0.5) {\Large $y^j$};
        \draw[-, >=latex,thick] (0,0.5)--(y1); 
        \draw[-, >=latex,thick] (4,0.5)--(y2); 
	    \node at (2,0.5) {\Large $B_j$};
        \node (label) at (2,-1) {\Large 6(d)};		
		\end{scope}
	
		\end{tikzpicture}}
		\vspace{-3pt}
\end{center}
%Case 7%%%%%%%%%%		
		\item Suppose that $j \in \{2,3,\ldots,n-1\}$ and $y^{j-1}=y^j$. This means that $y^{j-1}=x^{j-1}_{k'}$ and $y^j=x^j_{k''}$ for some $k', k''$. Since each knife moves in half-step, $y^j=x^{j-1}_{k'} \le x^j_{k'} \le x^j_{k''} + \frac 12= y^j + \frac 12$. Thus, we have  $x^j_{k'} -x^{j-1}_{k'} \le \frac 12$ and $B_j = \emptyset$. Consider the following cases. 
		\begin{enumerate}
            \item $|x^{j}_k - x^{j-1}_{k}| \le \frac 12$ so that $I_j(\bfx_k)=\emptyset$ and $\hat v_i(\bfx_k, j)=0$. 
            \item  $x^{j-1}_{k}= y^j - \frac{1}{2} $ and $x^{j}_k= y^j + \frac{1}{2}$ so that $I_j(\bfx_k)=\{y^{j}\}$ and $\hat v_i(\bfx_k, j)= v_i(I_j(\bfx_k) \setminus \{\ell_{j}(\bfx_k) \})= v_i (\{y^{j}\}\setminus \{y^{j}\})=v_i(\emptyset)=0$.
		\end{enumerate}
In either case, $v_i(I^*_j) \ge \hat v_i(\bfx_k, j) \ge v_i^-(I^*_j)$, since $v^{-}_i(I^*_j)=v_i(\emptyset)=0$. 

\end{enumerate}
This completes the proof.~\qed
\end{proof}

We are now ready to prove Theorem \ref{thm:EF1:general}. 

\begin{proof}[of Theorem \ref{thm:EF1:general}]
Applying Theorem \ref{thm:sperner} to the triangulation $\ttT_{\half}$ with the coloring function $\lambda$, we obtain an elementary simplex $S^*=\langle \bfx^*_1,\bfx^*_2,\ldots,\bfx^*_n \rangle$ of $\ttT_{\half}$. On this simplex, there exists a permutation $\pi \colon [n] \rightarrow [n]$ such that for each $i \in [n]$, $\pi(i) \in \lambda(\bfx^*_i)$.
%We assume without loss of generality that each agent $i \in [n]$ owns a vertex $\bfx_i$, i.e., $a(\bfx_i)=i$ for each $i \in [n]$. 
That is, for each $i\in [n]$, the bundle with index $\pi(i)$ in the division $\calI(\bfx^*_i)$ is the bundle most preferred by the owner $a(\bfx^*_i)$:
\begin{equation}
	\label{eq:ef2:my-bundle-is-best}
	{\hat v}_{a(\bfx^*_i)} (\bfx^*_i,\pi(i)) \ge {\hat v}_{a(\bfx^*_i)} (\bfx^*_i,j) \quad\text{for each $j\in[n]$}.
\end{equation}

Without loss of generality, we assume that $a(\bfx^*_i)=i$ for each $i \in [n]$. Applying Algorithm \ref{alg:rounding} to $S^*$, we obtain a division $\calI^*=(I^*_1,I^*_2,...,I^*_n)$. For every pair of agents $i,j \in [n]$, we have
\begin{alignat*}{2}
v_i(I^*_{\pi(i)})
&\ge \hat v_i (\bfx^*_i, \pi(i)) &&\text{by Lemma~\ref{lem:ef1:approx-correct}},
\\
&\ge \hat v_i (\bfx^*_i, j) &&\text{since $\pi(i) \in \smash{\lambda_{i}}(\bfx^*_{i})$},
\\
&\ge v_i^-(\smash{I^*_{j}}) \qquad \quad &&\text{by Lemma~\ref{lem:ef1:approx-correct}}.
\end{alignat*}
Thus, $\pi$ certifies that $\calI^*$ is a desired division.~\qed
\end{proof}

\section{Secretive and extra versions}\label{sec:sec:extra}
In this section, we prove the secretive and extra versions of EF1 existence. 
%secretive
A {\em secretive EF1$_{outer}$ division} for agent $i^*$ is a division $\calI=(I_1,I_2,\ldots,I_n)$ of a path into $n$ connected subsets where whichever part a secretive agent $i^*$ selects, an EF1$_{\emph outer}$ assignment of the remaining bundles can be made to the other agents, i.e., for every index $j \in [n]$, there exists a bijection $\pi\colon[n]\setminus \{i^*\} \rightarrow [n] \setminus \{j\}$ such that for every non-secretive agent $i \in [n] \setminus \{i^*\}$, 
$$
v_i(\calI_{\pi(i)}) \ge \max_{j' \in [n]}v^{-}_{i} (\calI_{j'}).
$$

%extra
For $n+1$ agents with monotone valuations, an {\em extra} EF1$_{\emph outer}$ division is a division $\calI=(I_1,I_2,\ldots,I_{n})$ of a path into $n$ connected subsets when any extra agent $i^*$ leaves, an EF1$_{\emph outer}$ assignment of the bundles can be made to the remaining agents, i.e., for every extra agent $i^* \in [n+1]$, there exists a bijection $\pi\colon[n+1]\setminus \{i^*\} \rightarrow [n]$ such that for every remaining agent $i \in [n+1] \setminus \{i^*\}$,
$$
v_i(\calI_{\pi(i)}) \ge \max_{j \in [n]}v^{-}_i (\calI_{j}).
$$ 
%Figure~\ref{fig:secretive:extra} illustrates secretive and extra divisions of a path.  

The main theorems of this section are as follows: 

\begin{theorem}\label{thm:EF1:secretive}
Suppose that there are $n$ agents with monotone valuations over connected bundles of a path. Then, for any agent $i^* \in [n]$, there exists a secretive EF1$_\emph{outer}$ division for $i^*$.
\end{theorem}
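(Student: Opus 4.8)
The plan is to reuse, essentially verbatim, the discretization of Section~\ref{sec:EF1} — the simplex $S_m$, Kuhn's triangulation $\ttT_{\half}$, the virtual valuations $\hat v_i(\bfx,j)$, the colorings $\lambda_i$, and Algorithm~\ref{alg:rounding} — and to replace the single application of Sperner's lemma by a \emph{secretive} analogue that involves only the $n-1$ non-secretive agents. Assume without loss of generality that $i^*=n$, so that the non-secretive agents are $1,\ldots,n-1$, and keep for each such $i$ the virtual valuations $\hat v_i(\bfx,j)$ and the proper coloring $\lambda_i\colon V(\ttT_{\half})\to 2^{[n]}$ exactly as defined in Section~\ref{sec:EF1}. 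The crucial leverage is that Lemma~\ref{lem:ef1:approx-correct} is proved for \emph{every} index $i\in[n]$, not only for owners: for any elementary simplex $S=\langle \bfx_1,\ldots,\bfx_n\rangle$ of $\ttT_{\half}$, the division $\calI^*=(I^*_1,\ldots,I^*_n)$ returned by Algorithm~\ref{alg:rounding} satisfies $v_i(I^*_j)\ge \hat v_i(\bfx_k,j)\ge v_i^-(I^*_j)$ for all $i,j,k\in[n]$. Consequently it suffices to exhibit one elementary simplex $S^*=\langle \bfx^*_1,\ldots,\bfx^*_n\rangle$ together with, for every $j\in[n]$, a bijection $\pi_j\colon[n-1]\to[n]\setminus\{j\}$ and an assignment $\tau_j\colon[n-1]\to[n]$ of the non-secretive agents to the vertices of $S^*$ such that $\pi_j(i)\in\lambda_i(\bfx^*_{\tau_j(i)})$ for all $i$; running Algorithm~\ref{alg:rounding} on $S^*$ then produces the required division.

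The existence of such an $S^*$ is the content of a discrete \emph{secretive Sperner lemma}: for any triangulation $\ttT$ of an $(n-1)$-simplex and any proper colorings $\lambda_1,\ldots,\lambda_{n-1}\colon V(\ttT)\to 2^{[n]}$, there is an elementary simplex $S^*$ such that for every $j\in[n]$ there exist a bijection $\pi_j\colon[n-1]\to[n]\setminus\{j\}$ and an injection $\tau_j\colon[n-1]\to[n]$ with $\pi_j(i)\in\lambda_i(\bfx^*_{\tau_j(i)})$ for all $i$. I would obtain this from the multilabeled Sperner lemma of \citet{MeunierSu} — equivalently, from the secretive Sperner/KKM lemma of \citet{Asada2018}, which is the combinatorial core of the existence of a secretive envy-free cake division \citep{Woodall1980} — applied to $\ttT_{\half}$; replacing each set-valued $\lambda_i$ by an arbitrary single-valued selection (which remains proper) reduces matters to the standard formulation. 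Its hypotheses hold because, exactly as in Section~\ref{sec:EF1}, every $\lambda_i$ is a proper coloring of $S_m$: a non-secretive agent never colors a vertex with the index of an empty bundle, and on each proper face of $S_m$ the colors that can occur are precisely the indices of the bundles that can be nonempty on that face.

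Granting $S^*$, I would apply Algorithm~\ref{alg:rounding} to it to get $\calI^*=(I^*_1,\ldots,I^*_n)$ and claim that this is a secretive EF1$_{\emph{outer}}$ division for $i^*=n$. Fix the part $j$ that the secretive agent selects and take the bijection $\pi_j$ and assignment $\tau_j$ furnished above. For every non-secretive agent $i\in[n-1]$ and every index $j'\in[n]$ we have $v_i(I^*_{\pi_j(i)})\ge \hat v_i(\bfx^*_{\tau_j(i)},\pi_j(i))\ge \hat v_i(\bfx^*_{\tau_j(i)},j')\ge v_i^-(I^*_{j'})$, where the first and third inequalities are instances of Lemma~\ref{lem:ef1:approx-correct}, and the middle one holds because $\pi_j(i)\in\lambda_i(\bfx^*_{\tau_j(i)})$, so by the definition of $\lambda_i$ the bundle $\pi_j(i)$ maximizes $\hat v_i(\bfx^*_{\tau_j(i)},\cdot)$ over the nonempty bundles of $\calI(\bfx^*_{\tau_j(i)})$, while $\hat v_i(\bfx^*_{\tau_j(i)},j')=0$ whenever the $j'$-th bundle is empty. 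Taking the maximum over $j'$ yields $v_i(I^*_{\pi_j(i)})\ge \max_{j'\in[n]} v_i^-(I^*_{j'})$, which is exactly the inequality defining a secretive EF1$_{\emph{outer}}$ division; since $j$ was arbitrary, $\calI^*$ is as desired.

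I expect the main obstacle to be the secretive Sperner lemma itself — namely, reconciling the form in which it is available in the literature (typically stated for the standard simplex $\Delta^{n-1}$ with arbitrary triangulations and single-valued Sperner labelings) with our half-integral simplex $S_m$ and Kuhn's triangulation, and, if one wishes to stay close to the self-contained owner-labeling route of Section~\ref{sec:EF1}, re-deriving it in that setting (for instance by a parity argument carried out simultaneously for the $n-1$ labelings). A minor but worth-stating point in the reduction is that the assignments $\tau_j$ need neither be injective nor be consistent across different choices of $j$: Lemma~\ref{lem:ef1:approx-correct} bounds $v_i(I^*_{j'})$ and $v_i^-(I^*_{j'})$ by $\hat v_i(\bfx^*_k,j')$ uniformly over all vertices $\bfx^*_k$ of $S^*$, so each non-secretive agent may use whichever vertex of $S^*$ the lemma hands her, independently for every part $j$ the secretive agent might take.
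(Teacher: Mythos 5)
Your proposal is correct and follows essentially the same route as the paper: the paper applies statement (1) of the multilabeled Sperner lemma of Meunier and Su (its Theorem~\ref{thm:multisperner}) to the same triangulation $\ttT_{\half}$ with the colorings $\lambda_i$, obtains for each $j$ a bijection $\pi_j\colon[n-1]\to[n]\setminus\{j\}$ with $\pi_j(i)\in\lambda_i(\bfx^*_k)$ for some vertex $\bfx^*_k$, rounds via Algorithm~\ref{alg:rounding}, and concludes through the same three-inequality chain using Lemma~\ref{lem:ef1:approx-correct} for all agents. The only obstacle you flag—transferring the lemma from the standard simplex to $S_m$—is handled in the paper by noting the two are affinely equivalent, and your observation that the vertex assignments need not be injective or consistent across $j$ matches the paper's usage.
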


\begin{theorem}\label{thm:EF1:extra}
Suppose that there are $n+1$ agents with monotone valuations over connected bundles of a path. Then, there exists an extra EF1$_\emph{ outer}$ division.
\end{theorem}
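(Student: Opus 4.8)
The plan is to rerun the topological argument of Section~\ref{sec:EF1} almost verbatim, changing only the combinatorial lemma that extracts the good elementary simplex, in parallel with the secretive case (Theorem~\ref{thm:EF1:secretive}). I keep the configuration space $S_m$, the half-integral Kuhn triangulation $\ttT_{\half}$, the virtual valuations $\hat v_i(\bfx,j)$, and the rounding procedure of Algorithm~\ref{alg:rounding}. The crucial observation is that Lemma~\ref{lem:ef1:approx-correct} continues to hold word for word when read for an arbitrary agent $i$ (and any bundle/vertex indices $j,k\in[n]$): its proof never refers to the total number of agents, only to a single agent's valuation and to the structure of one elementary simplex.

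First I would let each of the $n+1$ agents color the vertices of $\ttT_{\half}$ by $\lambda_i(\bfx)=\argmax\{\hat v_i(\bfx,j)\mid j\in[n],\ I_j(\bfx)\neq\emptyset\}$, exactly as in Section~\ref{sec:EF1}; each $\lambda_i$ is proper by the same reasoning. The owner-labeling shortcut of the main proof is no longer available: an elementary simplex of $\ttT_{\half}$ has only $n$ vertices, so at most $n$ of the $n+1$ agents can be placed on one elementary simplex, and plain Sperner would say nothing about the remaining agent. Hence one needs a stronger, multilabeled version of Sperner's lemma — the same type of tool underlying Theorem~\ref{thm:EF1:secretive}.

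The core step is to produce a single elementary simplex $S^*=\langle\bfx^*_1,\dots,\bfx^*_n\rangle$ of $\ttT_{\half}$ that is ``rainbow-matchable for every agent-removal scenario'': for each extra agent $i^*\in[n+1]$ there exist a bijection $\alpha^{i^*}\colon[n+1]\setminus\{i^*\}\to[n]$ (remaining agent $\mapsto$ vertex of $S^*$) and a selection $\pi^{i^*}(i)\in\lambda_i(\bfx^*_{\alpha^{i^*}(i)})$ for $i\neq i^*$ with $\{\pi^{i^*}(i):i\neq i^*\}=[n]$. I would obtain this from the multilabeled Sperner-type lemma behind Theorem~\ref{thm:EF1:secretive} (or its antipodal Fan-type strengthening, if needed) applied with $n+1$ proper colorings, possibly combined with a Hall/defect-type argument to upgrade ``sufficiently rich color sets at the $n$ vertices'' into an actual rainbow matching for each of the $n+1$ scenarios simultaneously.

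Given such an $S^*$, apply Algorithm~\ref{alg:rounding} to obtain $\calI^*=(I^*_1,\dots,I^*_n)$. Fix any extra agent $i^*$ and write $\pi=\pi^{i^*}$, $\alpha=\alpha^{i^*}$. For every remaining agent $i$ and every $j\in[n]$, Lemma~\ref{lem:ef1:approx-correct} together with $\pi(i)\in\lambda_i(\bfx^*_{\alpha(i)})$ gives
\[
v_i(I^*_{\pi(i)})\ \ge\ \hat v_i(\bfx^*_{\alpha(i)},\pi(i))\ \ge\ \hat v_i(\bfx^*_{\alpha(i)},j)\ \ge\ v_i^-(I^*_j),
\]
so $\pi$ certifies that the remaining $n$ agents get an EF1$_{\emph outer}$ assignment of $\calI^*$; since $i^*$ was arbitrary, $\calI^*$ is an extra EF1$_{\emph outer}$ division. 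I expect the purely combinatorial-topological step — forcing one elementary simplex to serve all $n+1$ removals at once — to be the main obstacle: this is exactly where the argument must go beyond plain Sperner and the single owner labeling of the proof of Theorem~\ref{thm:EF1:general}, while the coloring properness and the rounding verification via Lemma~\ref{lem:ef1:approx-correct} are routine.
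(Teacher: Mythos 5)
Your proposal follows essentially the same route as the paper's proof: extend the virtual valuations and colorings to agent $n+1$, apply the multilabeled Sperner lemma (statement $(2)$ of Theorem~\ref{thm:multisperner}) to $\ttT_{\half}$ to get one elementary simplex serving all $n+1$ removal scenarios, round it with Algorithm~\ref{alg:rounding}, and verify EF1$_{\emph outer}$ for each removed agent via the same three-inequality chain from Lemma~\ref{lem:ef1:approx-correct}. The only deviation is that you ask for a stronger ``rainbow matching onto distinct vertices'' plus a Hall-type upgrade, which is unnecessary: statement $(2)$ already gives, for every deleted coloring $\lambda_{i^*}$, a perfect matching between the colors $[n]$ and the remaining colorings in $G(S^*)$, and your final verification never uses distinctness of the vertices $\bfx^*_{\alpha(i)}$, so the paper's weaker condition suffices as is.
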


To establish the above results, we use a more general version of Sperner's lemma for multiple proper colorings, proved by \citet{MeunierSu}. Consider a triangulation $\ttT$ of the $(n-1)$ standard simplex and coloring functions $\lambda_i \colon V(\ttT) \rightarrow 2^{[n]}$ for $i \in [n+1]$. For each elementary simplex $S \in \ttT$, we define the associated bipartite graph representation $G(S)$. The left and right vertices of graph $G(S)$ correspond to the coloring functions $\lambda_1,\lambda_2,\ldots,\lambda_{n+1}$ and colors $[n]$, respectively. There is an edge $\{ \lambda_i,j \}$ if and only if $j \in \lambda_i(\bfx)$ for some main vertex $\bfx$ of the elementary simplex $S$. An example of such bipartite graph representation is given in Figure~\ref{fig:bipartite}. The following multi-labeled version of Sperner's lemma was shown in their proof of Theorem $2.2$. %See Figure \ref{fig:bipartite} for an illustration.

\begin{theorem}[Corollary of the proof of Theorem $2.2$ in \citet{MeunierSu}]\label{thm:multisperner}
Let $\ttT$ be a triangulation of the $(n-1)$-standard simplex and let $\lambda_1,\lambda_2,\dots,\lambda_{n+1}$ be proper colorings on $\ttT$. Then the following hold:
\begin{enumerate}
\item[(1)] There exists an elementary simplex $S^*=\langle {\bfx}_1,{\bfx}_2,\ldots,{\bfx}_n \rangle \in \ttT$ for which for any vertex $j$ of $[n]$, the graph $G(S^*)$ has a perfect matching between the left vertices in $\{\lambda_1, \lambda_2,\ldots,\lambda_{n-1}\}$ and the right vertices in $[n] \setminus \{j\}$.  
\item[(2)] There exists an elementary simplex $S^*=\langle {\bfx}_1,{\bfx}_2,\ldots,{\bfx}_{n} \rangle \in \ttT$ for which for any $\lambda_{i^*}$, the graph $G(S^*)$ has a perfect matching between the right vertices in $[n]$ and the left vertices in $\{\lambda_1, \lambda_2,\ldots,\lambda_{n+1}\}\setminus \{\lambda_{i^*}\}$. 
\end{enumerate}
\end{theorem}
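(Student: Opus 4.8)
The statement is the combinatorial core of the proof of Theorem~2.2 in \citet{MeunierSu}, and the plan is to recover it by following their argument; I outline the line of reasoning rather than reproduce every detail. Two ingredients are needed: a multilabeled strengthening of Sperner's lemma, run separately for parts~(1) and~(2), and a routine translation of its output into perfect matchings via Hall's marriage theorem ``with surplus''.

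For the topological ingredient I would refine the usual parity (door-counting) proof of Sperner's lemma on the triangulation $\ttT$ of $\Delta^{n-1}$ so as to track, along the induction on dimension, the bipartite type $G(S)$ of each elementary simplex rather than merely whether it is rainbow; the contributions coming from a facet of $\Delta^{n-1}$ are controlled using the fact that \emph{every} $\lambda_i$ restricts to a proper coloring of that facet, and a parity argument shows that the elementary simplices whose type fails the target matching property cannot exhaust all of $\ttT$. For part~(2) this produces an elementary simplex $S^*$ in which every nonempty set $J\subseteq[n]$ of colors is hit by at least $|J|+1$ of the colorings at the vertices of $S^*$ --- that is, $G(S^*)$ has surplus one on the color side; for part~(1) the analogous run produces an $S^*$ in which every nonempty $A\subseteq\{\lambda_1,\dots,\lambda_{n-1}\}$ sees at least $|A|+1$ colors at the vertices of $S^*$.

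Granting such an $S^*$, both conclusions follow at once from Hall's theorem. In part~(1), fix a color $j\in[n]$; for every $A\subseteq\{\lambda_1,\dots,\lambda_{n-1}\}$ the neighborhood of $A$ in $G(S^*)$ that lies in $[n]\setminus\{j\}$ has size at least $(|A|+1)-1=|A|$, so Hall's condition holds and a perfect matching between $\{\lambda_1,\dots,\lambda_{n-1}\}$ and $[n]\setminus\{j\}$ exists. In part~(2), fix $i^*\in[n+1]$; for every $J\subseteq[n]$ the neighborhood of $J$ among $\{\lambda_1,\dots,\lambda_{n+1}\}\setminus\{\lambda_{i^*}\}$ has size at least $(|J|+1)-1=|J|$, so Hall's condition on the color side holds and a perfect matching between $[n]$ and $\{\lambda_1,\dots,\lambda_{n+1}\}\setminus\{\lambda_{i^*}\}$ exists. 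The entire difficulty sits in the topological ingredient: the owner-labeling reduction used for Theorem~\ref{thm:EF1:general} is unavailable here, since an elementary simplex of $\ttT$ has only $n$ vertices while there are $n+1$ colorings, and applying plain Sperner (Theorem~\ref{thm:sperner}) to an aggregate of the colorings yields only a rainbow elementary simplex, i.e.\ deficiency-zero Hall, whereas both statements need one extra unit of slack \emph{uniformly} over all excluded colors (respectively, labels); securing this uniform slack is exactly the role of the refined parity bookkeeping in \citet{MeunierSu}, and I would not expect a black-box reduction to Theorem~\ref{thm:sperner} to suffice.
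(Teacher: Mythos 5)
First, note what you are being compared against: the paper does not prove this statement at all --- it is imported verbatim as a ``corollary of the \emph{proof} of Theorem 2.2'' of \citet{MeunierSu}, so the only question is whether your sketch would actually reconstitute that argument. The part of your proposal that is solid is the reduction: your ``surplus-one'' Hall condition is not merely sufficient but exactly equivalent to the two conclusions (if some nonempty $J\subseteq[n]$ had $|N(J)|\le |J|$ in $G(S^*)$, deleting a labeling in $N(J)$ would violate Hall, and symmetrically for part (1)), and your observation that an owner-labeling aggregation plus plain Sperner only yields deficiency-zero Hall, hence cannot give the uniform slack, is correct.

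The gap is that the entire theorem now lives in the ``topological ingredient,'' and what you offer there is an assertion, not an argument. You say you would ``refine the usual parity (door-counting) proof'' so as to track the bipartite type $G(S)$ and conclude that cells failing the surplus property ``cannot exhaust all of $\ttT$,'' but you never identify what is being counted, what the parity invariant is, why its boundary contribution (coming from $n+1$ \emph{independent} proper colorings restricted to a facet) behaves inductively, or why the count of surplus cells --- as opposed to, say, a signed count of (cell, matching) pairs --- is odd. The classical door-counting induction establishes an odd number of fully colored cells for a \emph{single} labeling; there is no evident analogue in which ``has a matching robust to every single deletion'' is the property with odd count, and you give no reason one exists. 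Since the statement is explicitly a corollary of the internal structure of Meunier--Su's proof rather than of its statement, deferring this step to ``the refined parity bookkeeping in \citet{MeunierSu}'' is circular as a proof proposal: it restates the theorem (in your equivalent surplus form) and attributes it back to the source the paper already cites. To close the gap you would need to actually carry out their construction (or an alternative, e.g.\ via a degree-theoretic or constrained/colorful Sperner argument) and verify that it delivers a single elementary simplex satisfying the surplus condition simultaneously for all excluded colors in (1) and all excluded colorings in (2).
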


\begin{figure*}[htb]
\begin{subfigure}[t]{0.5\columnwidth}
\centering
\begin{tikzpicture}[scale=0.7, transform shape]
\draw [thick] (0,0) node[below]{${\{1\}},\{1\},{\{2\}},\{1,2\}$} -- (4,0);
\draw [thick] (0,0) node[below]{} -- (2,3) node[above]{${\{1\}},\{1\},{\{2,3\}},\{2\}$};
\draw [thick] (4,0) node[below]{${\{1\}},\{1,2\},{\{2\}},\{2\}$} -- (2,3);
\end{tikzpicture}
\subcaption{The elementary simplex with coloring functions $\lambda_1,\lambda_2,\lambda_3,\lambda_4$}
\end{subfigure}
%%%
\begin{subfigure}[t]{0.5\columnwidth}
\centering
\begin{tikzpicture}[scale=0.6,thick,
  every node/.style={draw,circle},
  fsnode/.style={fill=white},
  ssnode/.style={fill=white},
  every fit/.style={ellipse,draw,inner sep=-1pt,text width=1.7cm},
  ->,shorten >= 3pt,shorten <= 3pt
]

% the vertices of U
\begin{scope}[start chain=going below,node distance=5mm]
\foreach \i in {1,2,3,4}
  \node[fsnode,on chain] (f\i) [label=left: $\lambda_{\i}$] {};
\end{scope}

% the vertices of V
\begin{scope}[xshift=4cm,yshift=-1cm,start chain=going below,node distance=5mm]
\foreach \i in {1,2,3}
  \node[ssnode,on chain] (s\i) [label=right: \i] {};
\end{scope}

% the set U
%\node [fit=(f1) (f3),label=above:] {};
% the set V
%\node [fit=(s1) (s3),label=above:] {};

% the edges
\draw[-, >=latex,thick] (f1) -- (s1);
\draw[-, >=latex,thick] (f1) -- (s3);
\draw[-, >=latex,thick] (s1) -- (f2);
\draw[-, >=latex,thick] (f2) -- (s2);
\draw[-, >=latex,thick] (s2) -- (f3);
\draw[-, >=latex,thick] (s3) -- (f3);
\draw[-, >=latex,thick] (f4) -- (s2);
\draw[-, >=latex,thick] (f4) -- (s1);
\end{tikzpicture}
\subcaption{The bipartite graph representation of $($a$)$}
\end{subfigure}
\caption{Example of the bipartite graph representation of an elementary simplex with $n=3$. In Figure $($a$)$, the collection of sets around each corner vertex $\bfx_j$ represents how each $\lambda_i$ colors that vertex, i.e., $(\lambda_1(\bfx_j),\lambda_2(\bfx_j),\lambda_3(\bfx_j),\lambda_4(\bfx_j))$. }\label{fig:bipartite}
\end{figure*}
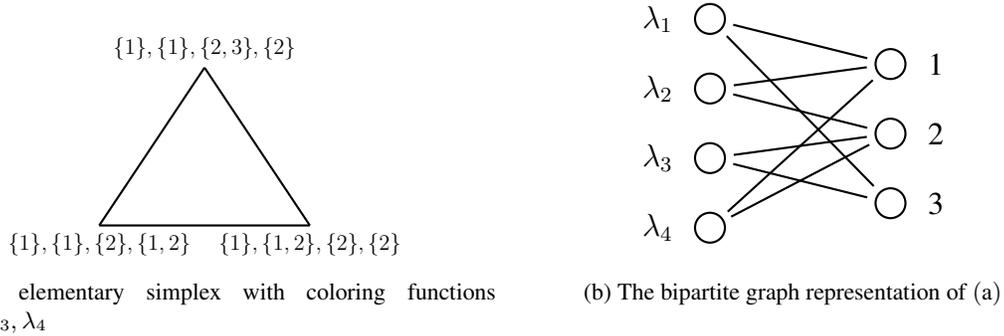

Statement $(1)$ in Theorem \ref{thm:multisperner} asserts the existence of an elementary simplex $S^*$ whose graph representation has a matching covering the vertices in $\{\lambda_1,\lambda_2,\ldots,\lambda_{n-1}\}$ after the removal of any vertex in $[n]$; for example, in Figure~\ref{fig:bipartite}, there is a perfect matching between $\{\lambda_1,\lambda_2\}$ and $\{1,2,3\}\setminus \{j\}$ for any $j \in \{1,2,3\}$. 

Statement $(2)$ in Theorem \ref{thm:multisperner} ensures the existence of an elementary simplex $S^*$ whose graph representation has a  matching covering the right vertices $[n]$ after the removal of any vertex in $\{\lambda_1, \lambda_2,\ldots,\lambda_{n+1}\}$; for example, in Figure~\ref{fig:bipartite}, there is a perfect matching between $\{1,2,3\}$ and $\{\lambda_1,\lambda_2,\lambda_3,\lambda_4\} \setminus \{\lambda_{i^*}\}$  for any $i^* \in \{1,2,3,4\}$.

Using the above theorem, \citet{MeunierSu} recovered the existence of a secretive envy-free division of a cake \citep{Woodall1980,Asada2018} and further proved its dual version that the cake can be divided into $n$ connected pieces so that no matter which agent gets kicked out, there is an envy-free assignment of the pieces to the remaining agents.   
%\ayumi{We recover Spener's lemma when... add figures}

By applying Theorem~\ref{thm:multisperner} to the $\ttT_{\half}$ in the previous section, we prove Theorems~\ref{thm:EF1:secretive} and \ref{thm:EF1:extra}.\footnote{Note that Theorem \ref{thm:multisperner} is concerned with the standard simplex $\Delta^{n-1}$. However, Theorem \ref{thm:multisperner} can be also applied to $S_m$ since $S_m$ and $\Delta^{n-1}$ are affinely equivalent (there is an affine transformation $f \colon \Delta^{n-1} \rightarrow S_m$ with the $i$-th component $f(\bfx)^{i}=mx_i + \frac{1}{2}$).} 

\begin{proof}[of Theorem~\ref{thm:EF1:secretive}]
%{\em Secretive version}: 
%%%secretive version
Suppose there are $n$ agents. Assume without loss of generality that $n$ is a secretive agent, namely, $i^*=n$. We use the same simplex $S_m$, triangulation $\ttT_{\half}$, and coloring functions $\lambda_i$ $(i \in [n])$ as defined in Section~\ref{sec:EF1}. 
By applying Theorem \ref{thm:multisperner} to $\ttT_{\half}$ with $\lambda_i$ $(i \in [n])$, we obtain an elementary simplex $S^*_1=\langle \bfx^*_1,\bfx^*_2,\ldots,\bfx^*_n \rangle \in \ttT_{\half}$ satisfying Condition $($1$)$ of Theorem \ref{thm:multisperner}: that is, regardless of which $j \in [n]$ we remove from, the graph $G(S^*_1)$ has a matching covering the vertices in ${\lambda}_1,{\lambda}_2,\ldots,{\lambda}_{n-1}$. Thus, for each $j \in [n]$, there exists a bijection $\pi_j \colon [n-1] \rightarrow [n]\setminus \{j\}$ such that for every $i \in [n-1]$,
\begin{align}\label{eq:secretive}
&\pi_j(i) \in \lambda_i(\bfx^*_{k})~~\mbox{for some}~~\bfx^*_k.  
\end{align}

Now, by applying Algorithm \ref{alg:rounding} to the elementary simplex $S^*_1$, we obtain a division $\calI^*=(I^*_1,I^*_2,\ldots,I^*_n)$ of the path. Take any index $j \in [n]$ and any non-secretive agent $i \in [n-1]$. For a bijection $\pi_j \colon [n-1] \rightarrow [n]\setminus \{j\}$, there exists a main vertex $\bfx^*_k$ of $S^*_1$ where $\pi_j(i) \in \lambda_{i}(\bfx^*_{k})$. Hence, we have
\begin{alignat*}{2}
v_i(I^*_{\pi_j(i)})
&\ge \hat v_i (\bfx^*_k, \pi_j(i)) &&\text{by Lemma~\ref{lem:ef1:approx-correct}},
\\
&\ge \hat v_i (\bfx^*_k, j') &&\text{since $\pi_j(i) \in \smash{\lambda_{i}}(\bfx^*_{k})$},
\\
&\ge v_i^-(\smash{I^*_{j'}}) \qquad \quad &&\text{by Lemma~\ref{lem:ef1:approx-correct}},
\end{alignat*}
for any $j' \in [n]$. Thus, $(\pi_j)_{j \in [n]}$ certifies that $\calI^*$ is a secretive EF1$_{\emph outer}$ division.~\qed
%Thus, each agent $i$ receives the bundle in the complete division corresponding to $i$'s most-preferable index $\pi(i)$.
\end{proof}

Similar to the previous proof, one can prove the existence of an extra EF1$_{\emph outer}$ connected division.%; the proof is deferred to the supplementary material. 

\begin{proof}[of Theorem~\ref{thm:EF1:extra}]
%{\em Extra version}:
%%extra version
%Assume that $i^*=n$ is an extra agent.  
We use the same simplex $S_m$, triangulation $\ttT_{\half}$, and coloring function $\lambda_i$ $(i \in [n])$ as defined in Section~\ref{sec:EF1}. For agent $n+1$, we define the virtual valuation ${\hat v}_{n+1}(\bfx,j)$ and its coloring function $\lambda_{n+1}$ similarly as in Section~\ref{sec:EF1}. 
By applying Theorem \ref{thm:multisperner} to $\ttT_{\half}$ with $\lambda_i$ $(i \in [n+1])$, we obtain an elementary simplex $S^*_2=\langle \bfx^*_1,\bfx^*_2,\ldots,\bfx^*_n \rangle \in \ttT_{\half}$ satisfying Condition $($2$)$ of Theorem \ref{thm:multisperner}: that is, regardless of which $\lambda_{i^*}$ disappears, the graph $G(S^*_2)$ has a matching covering the vertices in $[n]$. Thus, for each $i^* \in [n+1]$, there exists a bijection $\pi_{i^*} \colon  [n+1]\setminus \{i^*\}  \rightarrow [n]$ such that for every agent $i \in [n+1]\setminus \{i^*\}$,
\begin{align}\label{eq:extra}
&\pi_{i^*}(i) \in \lambda_{i}(\bfx^*_{k})~~\mbox{for some}~~\bfx^*_k.
\end{align}

By applying Algorithm \ref{alg:rounding} to $S^*_2$, we obtain a division $\calI^*=(I^*_1,I^*_2,...,I^*_n)$. Take any extra agent $i^* \in [n+1]$ and any remaining agent $i \in [n+1] \setminus \{i^*\}$. For a bijection $\pi_{i^*} \colon  [n+1]\setminus \{i^*\}  \rightarrow [n]$, there exists a main vertex $\bfx^*_k$ of $S^*_2$ where $\pi_{i^*}(i) \in \lambda_{i}(\bfx^*_{k})$. Hence,
\begin{alignat*}{2}
v_i(I^*_{\pi_{i^*}(j)})
&\ge \hat v_i (\bfx^*_k, \pi_{i^*}(i)) &&\text{by Lemma~\ref{lem:ef1:approx-correct}},
\\
&\ge \hat v_i (\bfx^*_k, j) &&\text{since $\pi_{i^*}(i) \in \smash{\lambda_{i}}(\bfx^*_{k})$},\\
&\ge v_i^-(\smash{I^*_{j}}) \qquad \quad  &&\text{by Lemma~\ref{lem:ef1:approx-correct}},
\end{alignat*}
for any $j \in [n]$. Thus, $(\pi_{i^*})_{i^* \in [n+1]}$ certifies that $\calI^*$ is a extra EF1$_{\emph outer}$ division.~\qed 
\end{proof}

%\section{Generalization to groups}
%In this section, we show that \cite{Segal-HaleviS21}. 
%\begin{theorem}\label{thm:EF1:general}
%For any number of agents with monotone valuation functions on a path and $k_1,k_2,\ldots,k_m$, a connected EF1$_{\emph outer}$ division exists.
%\end{theorem}
%\begin{proof}
%\end{proof}

\section{Conclusion and discussion}
We proved that under connectivity constraints, an EF1 division exists for any number of agents with monotone valuations, thereby resolving the open problem raised by \citet{BiloCaFl19}. We further extended this existential result to the secretive and extra variants. 

In contrast with the standard existence result in cake-cutting \citep{Su1999,MeunierSu}, our proof requires monotonicity in the agents' valuations. An interesting open question is whether the hungry preference assumption---where agents always prefer any non-empty bundle to an empty bundle---is sufficient to prove the existence of an EF1 connected division for any number of agents. 
%non-hungry preferences

Recent research on fair division extensively investigates the setting where agents may have both positive and negative values for the items  \citep{AzizCaIg19,MeSh19,Halevi2018,brczi2020envyfree,jojic2019splitting}. In particular, \citet{AzizCaIg19} proposed an extension of EF1 to this more general setting, requiring agents' envy to disappear after the removal of one chore from an envious bundle or that of one good from an envied bundle. It will be interesting to investigate whether such a fairness notion can be achieved under connectivity constraints of a path. 
A possible direction would be to develop a similar discretization technique of the topological proof provided by \citet{jojic2019splitting}, who showed that an envy-free division of a partially burned cake exists when the number of agents is a prime power; see also \citet{Halevi2018} and \citet{MeSh19}.

Finally, this study highlights that the complexity of finding an EF1 connected division is an open problem. In particular, it would be interesting to settle the complexity question for a simple class of valuations, e.g. binary additive valuations.
%While \citet{Deng2012} obtained the PPAD-hardness of finding an approximate envy-free division among three agents, it does not translate to . 
%In particular, it would be interesting to settle the complexity question for binary additive valuations: each agent either approves or disapproves an item on a path and the value of a bundle is given by the number of approved items contained in the bundle. For agents with binary additive valuations, an EF1 connected division was shown for a simple rounding scheme of a fractional envy-free allocation. However, the complexity is unknown even for such simple valuations.  

%Since there are at least orderings that induce a different EF1 allocation, See Suksompong \cite{Suksompong}. 
%matroid constraints
%It is an open question whether . 
%\bibliographystyle{plain}

\section*{Acknowledgments}
The author thanks Fr{\'{e}}d{\'{e}}ric Meunier, Dominik Peters, Warut Suksompong, and William S. Zwicker for valuable insights and feedback.

\bibliographystyle{plainnat}
%\bibliography{abb,fair-connected} 

\end{document}